\pgfplotsset{compat=newest}
\pgfplotsset{plot coordinates/math parser=false}
\DeclareMathOperator{\diag}{diag}
\DeclareMathOperator{\CRLB}{CRLB}
\DeclareMathOperator{\Tr}{Tr}
\DeclareMathOperator{\sinc}{sinc}
\newtheorem{definition}{Definition}
\newtheorem{proposition}{Proposition}
\setlist{leftmargin=4mm}
\begin{document}
\title{{Error Bounds for Uplink and Downlink 3D Localization in 5G mmWave Systems}}
\author{Zohair Abu-Shaban,~\IEEEmembership{Member, IEEE}, Xiangyun Zhou,~\IEEEmembership{Member, IEEE},  Thushara Abhayapala, ~\IEEEmembership{Senior Member, IEEE},  Gonzalo Seco-Granados,~\IEEEmembership{Senior Member, IEEE},  and Henk Wymeersch,~\IEEEmembership{Member, IEEE},
\vspace{-18mm}
}
\maketitle
\thispagestyle{empty}
{{\let\thefootnote\relax\footnotetext{
Zohair Abu-Shaban was with the Research School of Engineering (RSEng), the Australian National University (ANU), Canberra, Australia, and is currently with the University of New South Wales, Canberra, Australia. Email: \{zohair.abushaban@unsw.edu.au\}
Xiangyun Zhou, and Thushara Abhayapala are with RSEng, ANU. Emails: \{xiangyun.zhou, thushara.abhayapala\}@anu.edu.au. Henk Wymeersch is with the Department of Signals and Systems, Chalmers University of Technology, Sweden. Email: henkw@chalmers.se. Gonzalo Seco-Granados is with the Department of Telecommunications and Systems Engineering, Universitat Aut\`onoma de Barcelona, Spain (UAB). Email: gonzalo.seco@uab.es.

This work is partly supported by the Horizon2020 project HIGHTS (High precision positioning for cooperative ITS applications) MG-3.5a-2014-636537, the VINNOVA COPPLAR project, funded under Strategic Vehicle Research and Innovation Grant No. 2015-04849, the Australian Government's Research Training Program (RTP) and the Spanish Grant TEC2014-53656-R
}}
\setlength{\abovedisplayskip}{3pt}
\setlength{\belowdisplayskip}{3pt}
\begin{abstract}
Location-aware communication systems are expected to play a pivotal part in the next generation of mobile communication networks. Therefore, there is a need to understand the localization limits in these networks, particularly, using millimeter-wave technology (mmWave). Towards that, we address the uplink and downlink localization limits in terms of 3D position and orientation error bounds for mmWave multipath channels. We also carry out a detailed analysis of the dependence of the bounds on different system parameters. Our key findings indicate that the uplink and downlink behave differently in two distinct ways. First of all, the error bounds have different scaling factors with respect to the number of antennas in the uplink and downlink. Secondly, uplink localization is sensitive to the orientation angle of the user equipment (UE), whereas downlink is not. Moreover, in the considered outdoor scenarios, the non-line-of-sight paths generally improve localization when a line-of-sight path exists. Finally, our numerical results show that mmWave systems are capable of localizing a UE with sub-meter position error, and sub-degree orientation error.
\end{abstract}
\vspace{-7mm}
\section{Introduction}
A strong candidate for the fifth generation of mobile communication (5G) is millimeter-wave (mmWave)  multiple-input-multiple-output (MIMO) technology, where both the user equipment (UE) and the base station (BS) are equipped with arrays of large number of antennas, and operate at a carrier frequency in the range of 30--300 GHz \cite{Andrews2014,Pi2011,Rappaport2013,Heath2016,Orhan2015}. Location-aided systems are expected to have a wide range of applications in 5G mmWave communication \cite{Taranto2014}, whether for vehicular communications \cite{Garcia2016}, assisted living applications \cite{Witrisal2016}, or to support the communication robustness and effectiveness in different aspects such as resource allocation \cite{Muppirisetty2016}, beamforming \cite{Han2012,Aviles2016}, and pilot assignment \cite{Akbar2016}. Therefore, the study of positioning in 5G mmWave systems becomes specially imperative. Due to the use of directional beamforming in mmWave, in addition to the UE position also the UE \textit{orientation} plays an important role in location-aided systems.

Conventionally position information is obtained by GPS, though this has several limitations. Most importantly, GPS suffers from degraded performance in outdoor rich-scattering scenarios and urban canyons, and may fail to provide a position fix for  indoor scenarios. Even in good conditions, GPS positioning accuracy ranges between 1--5 meters. To address these limitations, there has been intense research on competing radio-based localization technologies. To understand the fundamental behavior of any technology, the Cram\'er-Rao lower bound (CRLB)\cite{kay1993} or related bounds can be used. The CRLB provides a lower bound on the variance of an unbiased estimator of a certain parameter. The square-root of the CRLB of the position and the orientation are termed the position error bound (PEB), and the orientation error bound (OEB), respectively. PEB and OEB can be computed indirectly by transforming the bounds of the channel parameters, namely: directions of arrival (DOA), directions of departure (DOD), and time of arrival (TOA).


For conventional MIMO systems, the bounds of the 2D channel parameters are derived in \cite{Li2005}, based on received digital signals and uniform linear arrays (ULA), while bounds are derived in \cite{Larsen2009} based on 3D channel matrix with no transmit beamforming. It was found that having more transmit and receive antennas is beneficial for estimating the DOA and DOD. In both \cite{Li2005,Larsen2009} beamforming was not considered. The bounds on the channel parameters can be transformed into PEB and OEB as in \cite{Shen2007,Shen2010,Shen2010_2,Shen2010_3} that considered 2D cooperative wideband localization, highlighting the benefit of large bandwidths.

MmWave communication combines large antenna arrays with large bandwidths, and should be promising for localization, then. In \cite{Shahmansoori2017}, PEB and OEB  for 2D mmWave downlink localization using ULA are reported, while \cite{Han2016} considers 2D uplink multi-anchor localization. Furthermore, for indoor scenarios, PEB and OEB are analyzed in \cite{Guerra2017} for 3D mmWave uplink localization with a single beam whose direction is assumed to be known. Multipath environments are considered in the works in \cite{Shahmansoori2017,Han2016,Guerra2017}. However, the difference between the uplink and downlink for 3D and 2D with large number of antennas and analog transmit beamforming is yet to be investigated.

In this paper, we consider 3D mmWave localization for both the uplink and downlink under multipath conditions, and derive and analyze the PEB and OEB using multi-beam directional beamforming with arbitrary array geometry. \textcolor{black}{By nature, these bounds are theoretical, and serve as benchmarks to assess location estimation techniques, and study the feasibility of how well the location and orientation can be potentially estimated.} We derive these bounds by transforming the Fisher information matrix (FIM) of the channel parameters into the FIM of location parameters. We stress that although the channel parameters FIM is \textcolor{black}{structured similarly in the uplink and downlink, this is untrue for the location parameters FIM, which is obtained by transforming the FIM of the channel parameters. This transformation is different in the uplink and downlink} and leads  to different PEB and OEB. The contributions of this paper are summarized as follows:
\begin{itemize}
	\item \textcolor{black}{Based on the low-scattering sparse nature of the mmWave channel and the resulting geometrical model, we show that, under some conditions, the multipath parameter estimation can be reduced to a problem of multiple single-path estimation. We refer to  this reduction as the approximate approach. These conditions are highly relevant in mmWave due to channel sparsity, high number of antennas at the receiver and transmitter, and the very large bandwidth.} 
	\item We derive the single-path CRLB of the channel parameters in a closed-form for arbitrary geometry, and show how these bounds are related to the PEB and OEB bounds. We also propose closed-form expressions of PEB and OEB for 3D and 2D line-of-sight (LOS) localization.
	\item We derive the PEB and OEB for general uplink and downlink localization, based on exact and approximate approaches, and show the asymmetry between uplink and downlink via both analytical scaling results and numerical simulations with a uniform rectangular array (URA).
\end{itemize}
The rest of this paper is organized as follows. Section \ref{sec:prob_form} presents the problem statement, before deriving the channel parameters FIM in Section \ref{sec:FIM}, for arbitrary arrays and 3D localization. The transformation of the channel parameters FIM into PEB and OEB is detailed in Section, \ref{sec:PEB_OEB}. The numerical results and the conclusions are reported in Sections \ref{sec:sim} and \ref{sec:conc}, respectively.
\section{Problem Formulation}\label{sec:prob_form}
\begin{figure}[!t]
	\centering
			\vspace{-9mm}
		\includegraphics{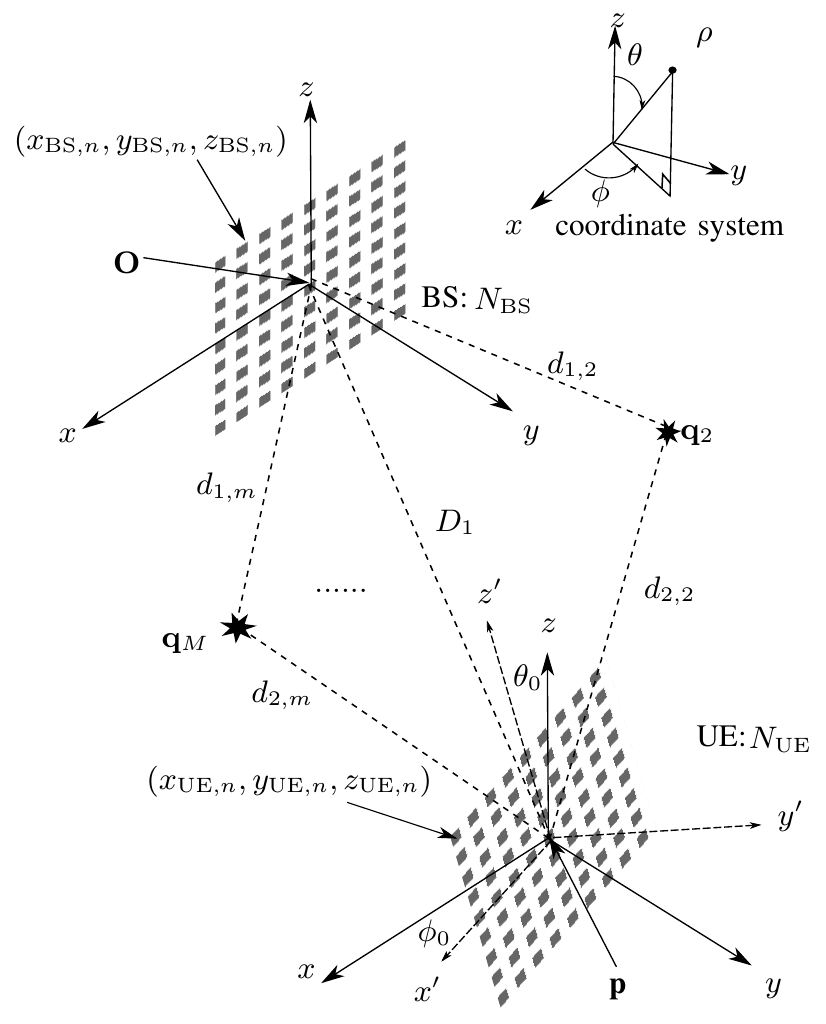}
				\vspace{-6mm}
	\caption{An example scenario composed of a URA of $N_\mathrm{UE}=N_\mathrm{BS}=81$ antennas, and $M$ paths. We use the spherical coordinate system highlighted in the top right corner. The axes labeled $x',y',z'$ are the result of rotating the array by $(\theta_0, \phi_0)$.}\label{fig:scenario}
	\vspace{-10mm}
\end{figure}

\subsection{System Geometry}
We consider a BS equipped with an array of $N_\mathrm{BS}$ antennas arranged in an arbitrary but known geometry whose centroid, i.e., \textcolor{black}{geometric center},  is located at the origin ($\mathbf{O}$), and orientation angle is $\mathbf{o}_{\mathrm{BS}}=[0, 0]^\mathrm{T}$. On the other hand, the centroid of the UE is located at an unknown position $\mathbf{p}=[p_x,p_y,p_z]^\mathrm{T}$ and equipped with a second array of $N_\mathrm{UE}$ antennas arranged in an arbitrary but known geometry with an unknown orientation $\mathbf{o}=[\theta_0, \phi_0]^\mathrm{T}$, aligning the UE with the rotated axes $x',y'$ and $z'$. An example with URAs is shown in Fig. \ref{fig:scenario}. \textcolor{black}{ $\phi_0$ is defined as the rotation around the $z$-axis, while $\theta_0$ is defined as the rotation around the $-x'$-axis. Thus, the UE array elements locations is obtained using a rotation matrix $\mathbf{R}(\theta_0,\phi_0)$ derived in Section \ref{sec:rotation_matrix}. Considering two orientation angles is highly relevant in applications such as vehicular communication and robotics, where the UE turns left and right, or goes up or down hills, without rotating the vehicle axis.}   There are $M \ge 1$ paths between BS and UE, where the first path is LOS, while with the other $M-1$ paths are associated with clusters located at $\mathbf{q}_m=[q_{m,x},q_{m,y},q_{m,z}]^\mathrm{T}$, $2\leq{m}\leq{M}$. These clusters can be reflectors or scatterers. Due to the mmWave propagation characteristics, the number of paths is small \cite{Pi2011} and correspond to single-bounce reflections \cite{Deng2014,Shahmansoori2017}. It was experimentally observed in \cite{Akdeniz2014} that the average value of $M$ under mmWave propagation in an urban environment in New York City is 2 to 3 paths with a maximum of 4 paths present. This resulted in modeling $M$ by a Poisson mass function whose average is 1.8  and 1.9 at 28 GHz, and 73 GHz, respectively. Thus, the channel is considered spatially sparse, and the parameters of different paths are assumed to be distinct, i.e., unique DOAs, DODs, TOAs and channel gains.
\subsection{Channel Model}
Denote the $m^{\mathrm{th}}$ DOD and DOA by $(\theta_{\mathrm{T},m},\phi_{\mathrm{T},m}) \text{ and }(\theta_{\mathrm{R},m},\phi_{\mathrm{R},m}), 1\leq{m}\leq{M}$, respectively, where the related unit-norm array response vectors are given by \cite{VanTrees2002}
\begin{align}
\mathbf{a}_{\mathrm{T},m}(\theta_{\mathrm{T},m},\phi_{\mathrm{T},m})&\triangleq\frac{1}{\sqrt{N_{\mathrm{T}}}}e^{-j\boldsymbol{\Delta}_{\mathrm{T}}^\mathrm{T}\mathbf{k}(\theta_{\mathrm{T},m},\phi_{\mathrm{T},m})},\qquad\in\mathbb{C}^{N_{\mathrm{T}}}\label{eq:a_t}\\
\mathbf{a}_{\mathrm{R},m}(\theta_{\mathrm{R},m},\phi_{\mathrm{R},m})&\triangleq\frac{1}{\sqrt{N_{\mathrm{R}}}}e^{-j\boldsymbol{\Delta}_{\mathrm{R}}^\mathrm{T}\mathbf{k}(\theta_{\mathrm{R},m},\phi_{\mathrm{R},m})},\qquad\in\mathbb{C}^{N_{\mathrm{R}}}\label{eq:a_r}
\end{align}
where  $\mathbf{k}(\theta,\phi)=\frac{2\pi}{\lambda}[\cos\phi\sin\theta, \sin\phi\sin\theta, \cos\theta]^\mathrm{T}$ is the wavenumber vector, $\lambda$ is the wavelength, $\boldsymbol\Delta_{\mathrm{R}}\triangleq[\mathbf{u}_{\mathrm{R},1},\mathbf{u}_{\mathrm{R},2},\cdots,\mathbf{u}_{\mathrm{R},N_{\mathrm{R}}}]$, $\mathbf{u}_{\mathrm{R},n}\triangleq[x_{\mathrm{R},n},y_{\mathrm{R},n},z_{\mathrm{R},n}]^\mathrm{T}$ is a vector of Cartesian coordinates of the $n^{\mathrm{th}}$ receiver element,  and $N_{\mathrm{R}}$ is the number of receiving antennas. $N_{\mathrm{T}}$, $\boldsymbol\Delta_{\mathrm{T}}$ and $\mathbf{u}_{\mathrm{T},n}$ are defined similarly\footnote{The subscripts $\mathrm{T}$ and $\mathrm{R}$ refer to the transmit and receive sides, respectively, regardless of using the uplink or downlink. On the other hand, when the notation is unique to the base station or the user equipment, we use the subscript $\mathrm{BS}$ and $\mathrm{UE}$.}. We drop the angle parameters from the notation of $\mathbf{a}_{\mathrm{T},m}, \text{ and }\mathbf{a}_{\mathrm{R},m}$.

Denoting the TOA of the $m^{\mathrm{th}}$ path by $\tau_m$, the channel can be expressed\footnote{We use a narrow-band array model, so that $A_\text{max}\ll c/W$, where $A_\text{max}$ is maximum array aperture, $c$ is speed of light, and $W$ is the system bandwidth \cite{VanTrees2002}.} as
\begin{align}
\mathbf{H}(t)& = \sum_{m=1}^{M}\mathbf{H}_m\delta (t-{\tau_m}),\label{eq:channelmodel}
\end{align}
From Fig. \ref{fig:scenario}, \textcolor{black}{assuming synchronization}\footnote{\textcolor{black}{We rely on the commonly used synchronization assumption e.g., \cite{Witrisal2016}, \cite{Li2005}, \cite{Shen2007,Shen2010,Shen2010_2,Shen2010_3}, \cite{Han2016}, to gain fundamental understanding. We, however, realize that in practice synchronization errors must be accounted for in protocols and algorithms. This can be done by, e.g., a two-way protocol or a joint localization and synchronization approach, which is beyond the scope of this paper.}}, $\tau_m=D_m/c$, where $D_m=d_{1,m}+d_{2,m}$, for $m>1$ and
\begin{align}
\mathbf{H}_m\triangleq\sqrt{N_{\mathrm{R}}N_\mathrm{T}}\beta_m\mathbf{a}_{\mathrm{R},m}\mathbf{a}^\mathrm{H}_{\mathrm{T},m}\in\mathbb{C}^{N_{\mathrm{R}}\times{N_\mathrm{T}}},
\end{align}
where $\beta_m$ is the complex gain of the $m^{\mathrm{th}}$ path. Finally, we define the following vectors
 \begin{equation*}
 \begin{aligned}
 \boldsymbol{\theta}_{\mathrm{R}}&\triangleq[\theta_{\mathrm{R},1},\theta_{\mathrm{R},2},...,\theta_{\mathrm{R},M}]^\mathrm{T},\\
  \boldsymbol{\phi}_{\mathrm{R}}&\triangleq[\phi_{\mathrm{R},1},\phi_{\mathrm{R},2},...,\phi_{\mathrm{R},M}]^\mathrm{T},\\
 \end{aligned}\qquad
 \begin{aligned}
  \boldsymbol{\theta}_\mathrm{T}&\triangleq[\theta_{\mathrm{T},1},\theta_{\mathrm{T},2},...,\theta_{\mathrm{T},M}]^\mathrm{T},\\ \boldsymbol{\phi}_\mathrm{T}&\triangleq[\phi_{\mathrm{T},1},\phi_{\mathrm{T},2},...,\phi_{\mathrm{T},M}]^\mathrm{T},\\
    \end{aligned}\qquad
   \begin{aligned}
   \boldsymbol{\beta}&\triangleq[\beta_1, \beta_2, ..., \beta_M]^\mathrm{T},\\   \boldsymbol{\tau}&\triangleq[\tau_{1},\tau_{2},...,\tau_{M}]^\mathrm{T}.
   \end{aligned}
\end{equation*}

\subsection{Transmission Model}
The transmitter sends a signal $\sqrt{E_\mathrm{s}} \mathbf{Fs}(t)$, where $E_\mathrm{s}$ is the transmitted energy per symbol duration, $\mathbf{F}\triangleq[\mathbf{f}_1,\mathbf{f}_2,...\mathbf{f}_{N_{\mathrm{B}}}]$ is a directional beamforming matrix, where \begin{align}
\mathbf{f}_\ell=\frac{1}{\sqrt{N_{\mathrm{B}}}}\mathbf{a}_{\mathrm{T},\ell}(\theta_\ell,\phi_\ell), \ \  1\leq\ell\leq{N_{\mathrm{B}}}\label{eq:BF_Definition}
\end{align}
is a beam pointing towards $(\theta_\ell,\phi_\ell)$ of the same form as \eqref{eq:a_t}, and $N_{\mathrm{B}}$ is the number of transmitted beams. The pilot signal $\mathbf{s}(t)\triangleq[s_1(t),s_2(t),...,s_{N_{\mathrm{B}}}(t)]^\mathrm{T}$ is expressed as
\begin{align}
s_\ell(t)=\sum_{k=0}^{N_{\mathrm{s}}-1}a_{\ell,k}p(t-kT_{\mathrm{s}}),\ 1\leq\ell\leq{N_{\mathrm{B}}}, \label{eq:txsignal}
\end{align}
where $a_{\ell,k}$ are known unit-energy pilot symbols transmitted over the $\ell^{\mathrm{th}}$ beam, and  $p(t)$, is a unit-energy pulse with a power spectral density (PSD), denoted by $|P(f)|^2$. In \eqref{eq:txsignal}, $N_{\mathrm{s}}$ is the number of pilot symbols and $T_{\mathrm{s}}$ is the symbol duration, leading to a total observation time of  $T_{\mathrm{o}} \approx N_{\mathrm{s}}T_{\mathrm{s}}$. To keep the transmitted power fixed with $N_\mathrm{T}$ and $N_\mathrm{B}$, we set $\mathrm{Tr}\left(\mathbf{F}^\mathrm{H}\mathbf{F}\right)=1$,  $\mathbb{E}\{\mathbf{s}(t)\mathbf{s}^\mathrm{H}(t)\}=\mathbf{I}_{N_{\mathrm{B}}}$, where $\mathrm{Tr}\left(\cdot\right)$ denotes the matrix trace, and $\mathbf{I}_{N_{\mathrm{B}}}$ is the $N_{\mathrm{B}}$-dimensional identity matrix.

The  received signal observed at the  input of the receive beamformer is given by
\begin{align}
\mathbf{r}(t)&\triangleq\sum_{m=1}^{M}\sqrt{E_\mathrm{s}}\mathbf{H}_m\mathbf{Fs}(t-{\tau_m})+\mathbf{n}(t),\ \  t\in[0,T_{\mathrm{o}}],\label{eq:model}
\end{align}
where $\mathbf{n}(t)\triangleq[n_1(t),n_2(t),...,n_{N_\mathrm{R}}(t)]^\mathrm{T}\in\mathbb{C}^{N_{\mathrm{R}}}$ is zero-mean white Gaussian noise with PSD $N_0$. Similar to \cite{Fakharzadeh2010,Venkateswaran2010}, we assume that a low-noise amplifier and a passband filter are attached to each receive antenna. While this may seem a restrictive assumption, it will allow us to derive the PEB and OEB, which are fundamental lower bounds irrespective of the type of processing performed at the receiver, such as receive beamforming.
\subsection{3D Localization Problem}
Our objective is to derive the UE PEB and OEB, based on the observed signal, $\mathbf{r}(t)$, for both the uplink and downlink. We achieve this in two steps: firstly, we derive bounds on the channel parameters, namely, direction of arrival, $(\boldsymbol{\theta}_{\mathrm{R}},\boldsymbol{\phi}_{\mathrm{R}}),$ direction of departure, $(\boldsymbol{\theta}_\mathrm{T},\boldsymbol{\phi}_\mathrm{T}),$ time of arrival $\boldsymbol{\tau}$, and paths gains, $\boldsymbol{\beta}$. Secondly, we transform these bounds into the position domain. 

\section{Fisher Information Matrix of The Channel Parameters}\label{sec:FIM}
We first derive exact expressions for the entries of the FIM. Then, we determine the conditions under which the individual paths can be considered orthogonal. Subsequently, we provide closed-form expressions of the CRLB for the single-path case for 3D and 2D localization.
%
%

\subsection{Exact Expression}\label{sec:exact_crlb}
Let us define the parameter vector
\begin{align}
\boldsymbol{\varphi}\triangleq[\boldsymbol{\theta}_{\mathrm{R}}^\mathrm{T},\boldsymbol{\theta}_\mathrm{T}^\mathrm{T},\boldsymbol{\phi}_{\mathrm{R}}^\mathrm{T},\boldsymbol{\phi}_\mathrm{T}^\mathrm{T},\boldsymbol{\tau}^\mathrm{T},\boldsymbol{\beta}_{\mathrm{R}}^\mathrm{T},\boldsymbol{\beta}_{\mathrm{I}}^\mathrm{T}]^\mathrm{T},
\end{align}
where
$\boldsymbol{\beta}_{\mathrm{R}}\triangleq\Re\{\boldsymbol{\beta}\}$, and  $\boldsymbol{\beta}_{\mathrm{I}}\triangleq\Im\{\boldsymbol{\beta}\}$ are the real and imaginary parts of $\boldsymbol{\beta}$, respectively, and $\varphi_u$ is the $u^{\mathrm{th}}$ element in $\boldsymbol\varphi$. Then, the corresponding FIM, partitioned into $M\times{M}$ submatrices, is

 structured as
\begin{align}
\label{eq:matrix_I}
\mathbf{J}_{\boldsymbol\varphi}\triangleq\begin{bmatrix}
\mathbf{J}_{{\boldsymbol\theta}_{\mathrm{R}}{\boldsymbol\theta}_{\mathrm{R}}}&\mathbf{J}_{{\boldsymbol\theta}_{\mathrm{R}}{\boldsymbol\theta}_{\mathrm{T}}}&\cdots&\mathbf{J}_{{\boldsymbol\theta}_{\mathrm{R}}{{\boldsymbol\beta}_{\mathrm{I}}}}\\
\mathbf{J}_{{\boldsymbol\theta}_{\mathrm{R}}{\boldsymbol\theta}_{\mathrm{T}}}^\mathrm{T}&\ddots&\cdots&\vdots\\
\vdots&\cdots&\ddots&\vdots\\
\mathbf{J}_{{{\boldsymbol\theta}_{\mathrm{R}}}\boldsymbol\tau}^\mathrm{T}&\cdots&\cdots&\mathbf{J}_{{{\boldsymbol\beta}_{\mathrm{I}}}{{\boldsymbol\beta}_{\mathrm{I}}}}\\
\end{bmatrix},
\end{align}
where, due to the additive white Gaussian noise \cite{kay1993},
\begin{align}
[\mathbf{J}_{\boldsymbol\varphi}]_{u,v}&\triangleq\frac{1}{N_0}\int_{0}^{T_{\mathrm{o}}}\Re\left\lbrace\frac{\partial\boldsymbol{\mu}^\mathrm{H}_{\boldsymbol{\varphi}}(t)}{\partial{\varphi_u}}\frac{\partial\boldsymbol{\mu}_{\boldsymbol\varphi}(t)}{\partial{\varphi_v}} \right\rbrace \mathrm{d}t,\label{eq:FIM}
\end{align}
\textcolor{black}{where $\boldsymbol{\mu}_{\boldsymbol\varphi}(t)$ is the noiseless part of received signal in \eqref{eq:model}}, given by
\begin{align}
\boldsymbol{\mu}_{\boldsymbol\varphi}(t)&\triangleq\sqrt{N_{\mathrm{R}}N_\mathrm{T}E_\mathrm{s}}\sum_{m=1}^{M}\beta_m\mathbf{a}_{\mathrm{R},m}\mathbf{a}^\mathrm{H}_{\mathrm{T},m}\mathbf{Fs}(t-{\tau_m}).\label{eq:multipath_mu}
\end{align}
We now introduce the following matrices to simplify the notation
 \label{eq:RXmatrices}
	\begin{align}
	\begin{aligned}
	\tilde{\mathbf{K}}_{\mathrm{R},m}&\triangleq\diag\left(\frac{\partial}{\partial{\theta_{\mathrm{R},m}}}\boldsymbol\Delta_{\mathrm{R}}^\mathrm{T}\mathbf{k}(\theta_{\mathrm{R},m},\phi_{\mathrm{R},m})\right),\\
		{\mathbf{P}}_{\mathrm{R}}&\triangleq[\tilde{\mathbf{P}}_{\mathrm{R},1}\mathbf{a}_{\mathrm{R},1}, \tilde{\mathbf{P}}_{\mathrm{R},2}\mathbf{a}_{\mathrm{R},2}, ..., \tilde{\mathbf{P}}_{\mathrm{R},N_{\mathrm{R}}}\mathbf{a}_{\mathrm{R},N_{\mathrm{R}}}],\\
			\mathbf{A}_{\mathrm{R}}&\triangleq[\mathbf{a}_{\mathrm{R},1}, \mathbf{a}_{\mathrm{R},2}, ..., \mathbf{a}_{\mathrm{R},N_{\mathrm{R}}}],
	\end{aligned}
	\quad
	\begin{aligned}
	\tilde{\mathbf{P}}_{\mathrm{R},m}&\triangleq\diag\left(\frac{\partial}{\partial{\phi_{\mathrm{R},m}}}\boldsymbol\Delta_{\mathrm{R}}^\mathrm{T}\mathbf{k}(\theta_{\mathrm{R},m},\phi_{\mathrm{R},m})\right),\\
	{\mathbf{K}}_{\mathrm{R}}&\triangleq[\tilde{\mathbf{K}}_{\mathrm{R},1}\mathbf{a}_{\mathrm{R},1}, \tilde{\mathbf{K}}_{\mathrm{R},2}\mathbf{a}_{\mathrm{R},2}, ..., \tilde{\mathbf{K}}_{\mathrm{R},N_{\mathrm{R}}}\mathbf{a}_{\mathrm{R},N_{\mathrm{R}}}],\\
	\mathbf{B}&\triangleq{\diag}(\boldsymbol{\beta})
	\end{aligned}
	\end{align}
with similar expression, obtained by replacing ``R" with ``T". \textcolor{black}{It is shown in Appendix \ref{app:FIM_entries}} that each submatrix in (\ref{eq:matrix_I}) is of the form
\begin{align}
\mathbf{J}_{\mathbf{x},\mathbf{x'}}= \Re \left\{ (\mathrm{RX~factor})\odot(\mathrm{TX~factor})\odot(\mathrm{signal~factor})\right\}, \label{eq:refcase}
\end{align}
where  $\odot$ denotes the Hadamard product,  the RX factor relates to the receiver array, the TX factor relates to the transmitter array and beamforming, and \textcolor{black} {the signal} factor relates to the pilot signals. The RX factor is a product of the matrices $\{{\mathbf{A}}_{\mathrm{R}}\mathbf{B},{\mathbf{K}}_{\mathrm{R}}\mathbf{B},{\mathbf{P}}_{\mathrm{R}}\mathbf{B}\}$, while the TX part is a product of similar matrices $\{\mathbf{F}^\mathrm{H}\mathbf{A}_\mathrm{T},\mathbf{F}^\mathrm{H}{\mathbf{K}}_{\mathrm{T}},\mathbf{F}^\mathrm{H}{\mathbf{P}}_{\mathrm{T}}\}$, associated with the transmitter as well as $\mathbf{F}$. Under the assumption of i.i.d. symbols, the signal factor depends on
\begin{align}
[\mathbf{R}_i]_{uv}\triangleq\int_{-W/2}^{W/2}(2\pi f)^{i}|P(f)|^2e^{-j2\pi f\Delta\tau_{uv}}\mathrm{d}f, \label{eq:R_i}
\end{align}
in which $\Delta\tau_{uv}\triangleq\tau_v-\tau_u$, $i\in \{0,1,2\}$, and $W$ is the signal bandwidth. \textcolor{black}{The signal factor in \eqref{eq:R_i} represents the correlation between different paths obtained in the frequency domain using Parseval's theorem. 
	See \eqref{eq:R_0}, \eqref{eq:R_1} and \eqref{eq:R_2}}. For instance, \textcolor{black}{defining the signal-to-noise ratio (SNR) as} $\gamma\triangleq{N_{\mathrm{R}}N_\mathrm{T}N_\mathrm{s}E_\mathrm{s}/N_0}$, it can be verified that
\begin{align}
& \mathbf{J}_{{\boldsymbol\theta}_{\mathrm{R}}{\boldsymbol\theta}_{\mathrm{R}}}=\gamma\Re \left\{ (\mathbf{B}^\mathrm{H}{\mathbf{K}}_{\mathrm{R}}^{\mathrm{H}}{\mathbf{K}}_{\mathrm{R}}\mathbf{B})\odot(\mathbf{A}_\mathrm{T}^{\mathrm{H}}\mathbf{F}\mathbf{F}^\mathrm{H}\mathbf{A}_\mathrm{T})^\mathrm{T}\odot\mathbf{R}_0\right\}\label{eq:thRthR}.
\end{align}
The remaining entries of (\ref{eq:matrix_I}) exhibit the structure in \eqref{eq:refcase}, as listed in Appendix \ref{app:FIM_entries}. \textcolor{black}{Note that the FIMs in (\ref{eq:matrix_I}) and (\ref{eq:bigFIMS}) scale linearly with SNR, meaning that CRLB decreases as SNR increases.}

\subsection{Approximate Fisher Information of the Channel Parameters}\label{sec:approx_crlb}
The exact FIMs presented in Section \ref{sec:exact_crlb} provide the exact CRLB of the channel parameters. However, under some circumstances, it is possible to simplify this computation by reducing the submatrices of the FIMs to either diagonal or zero matrices, by exploiting the structure in \eqref{eq:refcase}. Inspired by \cite{Wilkinson1968}, we start by introducing the following definition
\begin{definition}
\textcolor{black}{Given a square matrix $\mathbf{A}(\kappa)$ that can be decomposed into a diagonal matrix $\mathbf{D}(\kappa)\neq\mathbf{0}$ plus a hollow matrix $\mathbf{E}(\kappa)\neq\mathbf{0}$, then $\mathbf{A}(\kappa)=\mathbf{D}(\kappa)+\mathbf{E}(\kappa)$} is {almost diagonal} (AD) for any  parameter $\kappa$ if
\begin{align}
\lim_{\kappa\rightarrow\infty}\delta(\mathbf{A},\kappa)\triangleq\lim_{\kappa\rightarrow\infty}\frac{\|\mathbf{E}(\kappa)\|_\mathrm{F}}{\|\mathbf{D}(\kappa)\|_\mathrm{F}}=0,
\end{align}
where $\|\cdot\|_\mathrm{F}$ denotes the Frobenius norm.
\end{definition}

We now use Definition 1 to inspect the factors in \eqref{eq:refcase}, and understand the behavior under typical mmWave conditions, i.e., large transmit and receive arrays and large system bandwidth.

\begin{itemize}
\item \textit{Factor 1 -- Receiver Side: }
For a large number of receive antennas, the power received from a direction $(\theta_{\mathrm{R},u}, \phi_{\mathrm{R},u})$ via a steering vector in the direction $(\theta_{\mathrm{R},v}, \phi_{\mathrm{R},v})$ is very small, i.e., $\|\mathbf{a}_{\mathrm{R},u}^\mathrm{H}\mathbf{a}_{\mathrm{R},v}\|\ll\|\mathbf{a}_{\mathrm{R},u}\|^2$, $u\neq{v}$.  Thus, $\lim_{N_{\mathrm{R}}\rightarrow\infty}\delta(\mathbf{A}_{\mathrm{R}}^{\mathrm{H}}\mathbf{A}_{\mathrm{R}},N_{\mathrm{R}})={0}$. Similarly, considering the exponential form of $\mathbf{a}_{\mathrm{R},m}$ and that  $\tilde{\mathbf{K}}_{\mathrm{R},m}, \tilde{\mathbf{P}}_{\mathrm{R},m}$ are diagonal,
\begin{subequations}
\begin{align}
\lim_{N_{\mathrm{R}}\rightarrow\infty}\delta({\mathbf{K}}^{\mathrm{H}}_R{\mathbf{K}}_\mathrm{R},N_{\mathrm{R}})&=\lim_{N_{\mathrm{R}}\rightarrow\infty}\delta({\mathbf{P}}^{\mathrm{H}}_R{\mathbf{P}}_\mathrm{R},N_{\mathrm{R}})=\lim_{N_{\mathrm{R}}\rightarrow\infty}\delta({\mathbf{K}}^{\mathrm{H}}_\mathrm{R}{\mathbf{P}}_\mathrm{R},N_{\mathrm{R}})={0}.\notag
\end{align}
\end{subequations}
On the other hand, using the facts that the BS centroid is at the origin, then for uplink
\textcolor{black}{$\Tr(\tilde{\mathbf{K}}_{\mathrm{R},m})=\sum_{n=0}^{N_\mathrm{R}}\frac{\partial}{\partial\theta_\mathrm{R}}\boldsymbol{\Delta}_\mathrm{R}^\mathrm{T}\mathbf{k}=\left(\sum_{n=0}^{N_\mathrm{R}}\boldsymbol{\Delta}_\mathrm{R}^\mathrm{T}\right)\frac{\partial}{\partial\theta_\mathrm{R}}\mathbf{k}=0,$} and similarly, $\Tr(\tilde{\mathbf{P}}_{\mathrm{R},m})=0$, and that the UE centroid is at $\mathbf{p}$, then for downlink $ \Tr(\tilde{\mathbf{K}}_{\mathrm{R},m})=\mathbf{p}^\mathrm{T} \frac{\partial}{\partial\theta_m}{\mathbf{k}}$,$~
 \Tr(\tilde{\mathbf{P}}_{\mathrm{R},m})=\mathbf{p}^\mathrm{T}\frac{\partial}{\partial\phi_m}\mathbf{k}.$
Moreover, Noting that  $[{\mathbf{K}}^\mathrm{H}_{\mathrm{R}}{\mathbf{A}}_{\mathrm{R}}]_{m,m}={\Tr(\tilde{\mathbf{K}}_{\mathrm{R},m})}/{N_{\mathrm{R}}}$, then, for both uplink or downlink,
\begin{align}
\lim_{N_{\mathrm{R}}\rightarrow\infty}{\mathbf{K}}^\mathrm{H}_{\mathrm{R}}{\mathbf{A}}_{\mathrm{R}}&=\lim_{N_{\mathrm{R}}\rightarrow\infty}{\mathbf{P}}^\mathrm{H}_{\mathrm{R}}{\mathbf{A}}_{\mathrm{R}}=\mathbf{0}_{M\times{M}}.
\end{align}
where $\mathbf{0}_{M\times{M}}$ is an $M\times{M}$ matrix of zeros.

\item \textit{\textcolor{black}{Factor 2 -- Transmitter Side:}} The transmitter side contributes to the FIM in (\ref{eq:thRthR}) by $\mathbf{A}_\mathrm{T}^{\mathrm{H}}\mathbf{FF}^\mathrm{H}\mathbf{A}_\mathrm{T}$. Recalling that $[\mathbf{A}_\mathrm{T}^{\mathrm{H}}\mathbf{FF}^\mathrm{H}\mathbf{A}_\mathrm{T}]_{u,v}=\mathbf{a}_{\mathrm{T},u}^\mathrm{H}\mathbf{FF}^\mathrm{H}\mathbf{a}_{\mathrm{T},v},$
the right-hand side term can be interpreted as the spatial cross-correlation between the $u^{\mathrm{th}}$ and the $v^{\mathrm{th}}$ DODs. So, considering directional beamforming, as $N_\mathrm{T}$ increases, the beams become narrower, leading to
\begin{align}\label{eq:factor2}
\lim_{N_\mathrm{T}\rightarrow\infty}\mathbf{a}_{\mathrm{T},u}^\mathrm{H}\mathbf{FF}^\mathrm{H}\mathbf{a}_{\mathrm{T},v}\approx 0\qquad u\neq v.
\end{align}
Moreover, for extremely narrow beams, the likelihood of covering the $v^{\mathrm{th}}$ DOD is almost zero,
\begin{align}\label{eq:factor2_2}
\lim_{N_\mathrm{T}\rightarrow\infty}\mathbf{a}_{\mathrm{T},v}^\mathrm{H}\mathbf{FF}^\mathrm{H}\mathbf{a}_{\mathrm{T},v}=\lim_{N_\mathrm{T}\rightarrow\infty}\|\mathbf{F}^\mathrm{H}\mathbf{a}_{\mathrm{T},v}\|^2\approx 0.
\end{align}
In this extreme case, $\mathbf{A}_\mathrm{T}^{\mathrm{H}}\mathbf{FF}^\mathrm{H}\mathbf{A}_\mathrm{T}\approx\mathbf{0}$, which implies the whole FIM is zero. However, since \eqref{eq:factor2} approaches 0 faster than \eqref{eq:factor2_2}, the transmission over directional beamforming should be restricted to ${N_\mathrm{T}<\infty}$. That said, there are cases where \eqref{eq:factor2_2} does not hold and $\mathbf{A}_\mathrm{T}^{\mathrm{H}}\mathbf{FF}^\mathrm{H}\mathbf{A}_\mathrm{T}$ is AD, e.g., when using random beamforming \cite{Zohair2016}.
By inspection, a similar statement can be made for $({\mathbf{P}}_\mathrm{T}^{\mathrm{H}}\mathbf{FF}^\mathrm{H}{\mathbf{P}}_\mathrm{T})^\mathrm{T},({\mathbf{K}}_\mathrm{T}^{\mathrm{H}}\mathbf{FF}^\mathrm{H}{\mathbf{K}}_\mathrm{T})^\mathrm{T},$ and $({\mathbf{P}}_\mathrm{T}^{\mathrm{H}}\mathbf{FF}^\mathrm{H}{\mathbf{K}}_\mathrm{T})^\mathrm{T}$.

\item \textit{Factor 3: Multipath Cross-Correlation}: It can be shown that the cross-correlation functions in (\ref{eq:R_i}) are even \textcolor{black}{in $\Delta\tau_{uv}$} for $i=0,2$ and have maxima on their diagonals. These maxima are constant with values 1 and $4\pi^2W_\mathrm{eff}^2$, respectively, where $W_\mathrm{eff}^2=\int_{-W/2}^{W/2}f^2|P(f)|^2\mathrm{d}f$. Moreover,
\begin{align}
\lim_{W \to \infty} \delta(\mathbf{R}_0,W) = 0,\qquad\qquad\qquad
\lim_{W \to \infty} \delta(\mathbf{R}_2,W) = 0.
\end{align}
For any $W$, $\diag(\mathbf{R}_1)=\mathbf{0}_{M}$ so that $\mathbf{R}_1$ is a hollow matrix, with $\lim_{W \to \infty} \mathbf{R}_1 = \mathbf{0}_{M\times M}$. \textcolor{black}{So, in effect, the paths overlapping in time is negligible, which is consistent with \cite{Shen2008_2} and \cite{Leitinger2015}.}
\end{itemize}
\begin{figure}[!t]
	\centering
			\vspace{-2mm}
	\includegraphics[scale=0.9]{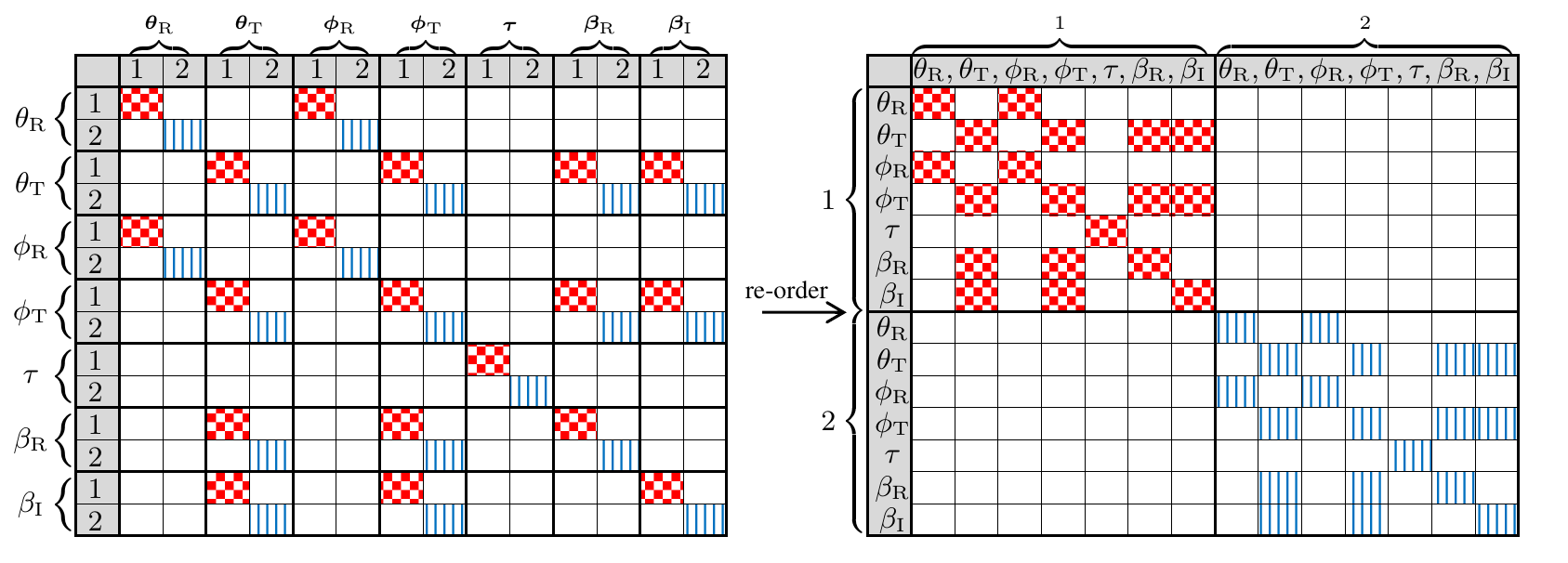}
		\vspace{-4mm}
	\caption{An example on the approximate FIM in (\ref{eq:matrix_I}) with $M=2$. The red-pattern and blue-strip cells represent the non-zero entries of the FIM and correspond to $m=1$, and $m=2$, respectively. Re-ordering the FIM on the left yields the FIMs in (\ref{eq:FIM_reordered}).}\label{fig:FIM_reorder}
	\vspace{-10mm}
\end{figure}

In combination, given the Hadamard product of \eqref{eq:refcase}, and assuming $N_\mathrm{T}<\infty$, we find that under typical mmWave conditions, due to the combined effect of large values of $N_\mathrm{R}$, and $W$, some submatrices of the FIM in ($\ref{eq:matrix_I}$) are AD, while the others are almost zero as shown in Fig.~\ref{fig:FIM_reorder} (left). This effect relates to the fact that paths can be resolved in either direction of arrival, direction, or delay domain, both of which reinforce each other in mmWave.  \textcolor{black}{In other words, it is sufficient to have $N_\mathrm{R}\rightarrow\infty$ or $W\rightarrow\infty$ in order for the paths to be orthogonal. However, as it is customary to have very large antenna arrays in mmWave systems, it is reasonable to assume that paths are always orthogonal.} Re-ordering the parameters, grouping them path by path, we obtain the block diagonal FIM in Fig.~\ref{fig:FIM_reorder} (right).

\subsection{FIM of Single-Path mmWave Channel Parameters}\label{sec:los_crlb}
\textcolor{black}{Focusing on the FIM of the $m^\mathrm{th}$ path, it is interesting to note that after obtaining the FIM in Fig.~\ref{fig:FIM_reorder} (right), it becomes evident that the estimation of $\tau_m$ is independent of any other parameter. This follows from the fact that, for the $m^\mathrm{th}$ path, $\Delta\tau_{mm}=0$ and $[\mathbf{R}_1]_{m,m}=0$. Moreover, note that the estimation of $\theta_{\mathrm{R},m}$ and $\phi_{\mathrm{R},m}$ is independent of the other parameters, unlike $\theta_{\mathrm{T},m}$ and $\phi_{\mathrm{T},m}$ which depend on $\beta_m$. This is because we use transmit beamforming only, hence power gain has two components: channel gain and antenna directional gain.}

We now use the notion of the equivalent FIM (EFIM) from \cite{Shen2010_2} to isolate the effect of the nuisance parameter $\beta$. EFIM is a measure of the information corresponding to a certain unknown parameter, taking into account the uncertainties of the other unknown parameters.
\begin{definition}
	Given a parameter vector $\boldsymbol{\theta}\triangleq[\boldsymbol{\theta}_1^{\mathrm{T}},\boldsymbol{\theta}_2^{\mathrm{T}}]^{\mathrm{T}}$ with corresponding FIM
	\begin{align}
	\mathbf{J}_{\boldsymbol{\theta}\boldsymbol{\theta}}=\begin{bmatrix}\mathbf{J}_{\boldsymbol{\theta}_1\boldsymbol{\theta}_1}&\mathbf{J}_{\boldsymbol{\theta}_1\boldsymbol{\theta}_2}\\
	\mathbf{J}^\mathrm{T}_{\boldsymbol{\theta}_1\boldsymbol{\theta}_2}&\mathbf{J}_{\boldsymbol{\theta}_2\boldsymbol{\theta}_2}
	\end{bmatrix},
	\end{align}
	Then, the EFIM of $\boldsymbol{\theta}_1$ is given by
	\begin{align}
	\mathbf{J}^{\mathrm{e}}_{\boldsymbol{\theta}_1\boldsymbol{\theta}_1}=\mathbf{J}_{\boldsymbol{\theta}_1\boldsymbol{\theta}_1}-\mathbf{J}_{\boldsymbol{\theta}_1\boldsymbol{\theta}_2}\mathbf{J}^{-1}_{\boldsymbol{\theta}_2\boldsymbol{\theta}_2}\mathbf{J}^\mathrm{T}_{\boldsymbol{\theta}_1\boldsymbol{\theta}_2}.
	\end{align}
\end{definition}
Given the block-diagonal  structure of the approximate FIM, it becomes meaningful to study paths separately. Thus, considering $\beta$ as a nuisance parameter, we focus on a single path, with the parameters of interest being ${\boldsymbol\varphi}_\mathrm{ch}\triangleq[\theta_{\mathrm{R}},\theta_\mathrm{T},\phi_{\mathrm{R}},\phi_\mathrm{T},\tau]^\mathrm{T}$, and write the EFIM of the DOA, DOD, and TOA, from Appendix \ref{app:CRLB}, as follows
\begin{align}
\mathbf{J}^\mathrm{e}_{{\boldsymbol\varphi}_\mathrm{ch}}=\begin{bmatrix}\mathbf{J}^\mathrm{e}_{\boldsymbol{\theta,\phi}}&\mathbf{0}_4\\
\mathbf{0}^\mathrm{T}_4&\gamma|\beta|^2 4\pi^2GW^2_{\mathrm{eff}}\end{bmatrix},\label{eq:partitioned}
	\end{align}
	where\small
\begin{align}\label{eq:EFIM_DOD_DOA}
\mathbf{J}^\mathrm{e}_{\boldsymbol{\theta,\phi}}&=\gamma|\beta|^2\left[\begin{array}{cccc}
R_\theta G& 0 &X_{\theta ,\phi}G &0\\
0& \frac{L_\theta }{G}&0&\frac{Y_{\theta ,\phi}}{G}\\
X_{\theta ,\phi}G&0&R_\phi G&0\\
0&\frac{Y_{\theta ,\phi}}{G}&0& \frac{L_\phi }{G}\\
\end{array}
\right],
\end{align}
\normalsize	in which\small
	\begin{equation*}
\begin{aligned}
R_\theta&\triangleq\mathbf{a}_{\mathrm{R}}^{\mathrm{H}}\tilde{\mathbf{K}}^2_{\mathrm{R}}\mathbf{a}_{\mathrm{R}},\\
R_\phi &\triangleq\mathbf{a}_{\mathrm{R}}^{\mathrm{H}}\tilde{\mathbf{P}}^2_{\mathrm{R}}\mathbf{a}_{\mathrm{R}},\\
V_\theta &\triangleq\mathbf{a}_\mathrm{T}^{\mathrm{H}}\tilde{\mathbf{K}}_{\mathrm{T}}\mathbf{F}\mathbf{F}^\mathrm{H}\mathbf{a}_\mathrm{T},\\
V_\phi &\triangleq\mathbf{a}_\mathrm{T}^{\mathrm{H}}\tilde{\mathbf{P}}_{\mathrm{T}}\mathbf{F}\mathbf{F}^\mathrm{H}\mathbf{a}_\mathrm{T}.\\
\end{aligned}
\ \
\begin{aligned}
L_\theta &\triangleq GT_\theta -|V_\theta |^2,\\
L_\phi &\triangleq GT_\phi -|V_\phi |^2,\\
T_\theta &\triangleq\mathbf{a}_\mathrm{T}^{\mathrm{H}}\tilde{\mathbf{K}}_{\mathrm{T}}\mathbf{F}\mathbf{F}^\mathrm{H}\tilde{\mathbf{K}}_{\mathrm{T}}\mathbf{a}_\mathrm{T},\\
T_\phi &\triangleq\mathbf{a}_\mathrm{T}^{\mathrm{H}}\tilde{\mathbf{P}}_{\mathrm{T}}\mathbf{F}\mathbf{F}^\mathrm{H}\tilde{\mathbf{P}}_{\mathrm{T}}\mathbf{a}_\mathrm{T},\\	
\end{aligned}
\ \
\begin{aligned}
G&\triangleq\mathbf{a}_\mathrm{T}^{\mathrm{H}}\mathbf{F}\mathbf{F}^\mathrm{H}\mathbf{a}_\mathrm{T},\\
X_{\theta ,\phi}&\triangleq\mathbf{a}_{\mathrm{R}}^{\mathrm{H}}\tilde{\mathbf{K}}_{\mathrm{R}}\tilde{\mathbf{P}}_{\mathrm{R}}\mathbf{a}_{\mathrm{R}},\\
Y_{\theta ,\phi}&\triangleq GY'_{\theta ,\phi}-\Re\{V_\phi V_\theta ^*\},\\	
Y'_{\theta ,\phi}&\triangleq\Re\{\mathbf{a}_\mathrm{T}^{\mathrm{H}}\tilde{\mathbf{P}}_{\mathrm{T}}\mathbf{F}\mathbf{F}^\mathrm{H}\tilde{\mathbf{K}}_{\mathrm{T}}\mathbf{a}_\mathrm{T}\},\\	
\end{aligned}
\end{equation*}\normalsize
\textcolor{black}{Note that $G$ denotes the transmit array gain in a direction $\theta_\mathrm{T}$, while $R_\theta, R_\phi, T_\theta$, and $T_\phi$ are the information contents related to the spatial aspects of the received signal and correspond to $\theta_\mathrm{R}$, $\phi_\mathrm{R}$, $\theta_\mathrm{T}$, and $\phi_\mathrm{T}$, excluding the SNR, i.e., the integrands in \eqref{eq:FIM}. Similarly, $V_\theta, V_\phi, X_{\theta,\phi}$, and $Y'_{\theta,\phi}$ respectively represent the mutual spatial information between $\theta_\mathrm{T}$ and $\beta$, $\phi_\mathrm{T}$ and $\beta$,  $\theta_\mathrm{R}$ and $\phi_\mathrm{R}$, and $\theta_\mathrm{T}$ and $\phi_\mathrm{T}$. Consequently, $L_\theta$ and $L_\phi$ represent the \textit{equivalent} Fisher spatial information of $\theta_\mathrm{T}$ and $\phi_\mathrm{T}$, respectively, after removing the dependence on $\beta$. Finally, $Y_{\theta,\phi}$ denotes the equivalent mutual information of $\theta_\mathrm{T}$ and $\phi_\mathrm{T}$, after removing the dependence on $\beta$.}

The CRLB of the channel parameters for arbitrary array geometries is provided below.
\begin{proposition}\label{prop:CRB}
	Based on the FIM in (\ref{eq:partitioned}), the CRLBs of the DOA, DOD and TOA are given by
	\begin{subequations}\label{eq:closed_CRLB}
		\begin{align}
		\CRLB\left(\theta_{\mathrm{R}}\right)&\textcolor{black}{=\frac{1}{\gamma|\beta|^2~G\left(R_\theta -\frac{X_{\theta ,\phi}^2}{R_\phi}\right)}}=\frac{R_\phi}{\gamma\zeta_1|\beta|^2 G},\\
		\CRLB\left(\phi_{\mathrm{R}}\right)&\textcolor{black}{=\frac{1}{\gamma|\beta|^2~G\left(R_\phi -\frac{X_{\theta ,\phi}^2}{R_\theta}\right)}}=\frac{R_\theta}{\gamma\zeta_1|\beta|^2 G},\\
		\CRLB\left(\theta_{\mathrm{T}}\right)&\textcolor{black}{=\frac{G}{\gamma|\beta|^2\left(L_\theta -\frac{Y_{\theta ,\phi}^2}{L_\phi}\right)}}=\frac{GL_\phi}{\gamma\zeta_2|\beta|^2 },\\
		\CRLB\left(\phi_{\mathrm{T}}\right)&\textcolor{black}{=\frac{G}{\gamma|\beta|^2\left(L_\phi -\frac{Y_{\theta ,\phi}^2}{L_\theta}\right)}}=\frac{GL_\theta}{\gamma\zeta_2|\beta|^2 },\\
		\CRLB(\tau)&=\frac{1}{(4\pi^2W_{\mathrm{eff}}^2)(\gamma|\beta|^2G)}\label{eq:sing_crlb_tau}.
		\end{align}
	\end{subequations}
	where $\zeta_1=R_\theta R_\phi -X_{\theta ,\phi}^2$, and $\zeta_2=L_\theta L_\phi -Y_{\theta ,\phi}^2$.
\end{proposition}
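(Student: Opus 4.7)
The plan is to exploit the block structure of the EFIM in \eqref{eq:partitioned} and then invert the resulting small blocks in closed form, since CRLBs are the diagonal entries of the inverse FIM.

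First, observe that the EFIM is already block-diagonal with respect to $\tau$ versus the angular parameters, because the off-diagonal blocks in \eqref{eq:partitioned} are zero. Hence $\CRLB(\tau)$ is just the reciprocal of the scalar $(1,1)$ entry of the lower block, giving \eqref{eq:sing_crlb_tau} immediately.

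Next, I would turn to $\mathbf{J}^\mathrm{e}_{\boldsymbol{\theta,\phi}}$ in \eqref{eq:EFIM_DOD_DOA}. In the ordering $(\theta_\mathrm{R},\theta_\mathrm{T},\phi_\mathrm{R},\phi_\mathrm{T})$ the matrix has nonzero entries only at positions where both indices belong to the receiver set $\{\theta_\mathrm{R},\phi_\mathrm{R}\}$ (rows/columns $1,3$) or both belong to the transmitter set $\{\theta_\mathrm{T},\phi_\mathrm{T}\}$ (rows/columns $2,4$). A symmetric permutation that groups $(\theta_\mathrm{R},\phi_\mathrm{R})$ together and $(\theta_\mathrm{T},\phi_\mathrm{T})$ together therefore converts $\mathbf{J}^\mathrm{e}_{\boldsymbol{\theta,\phi}}$ into the block diagonal form
\begin{equation*}
\gamma|\beta|^2\begin{bmatrix} G\,\mathbf{A}_\mathrm{R} & \mathbf{0} \\ \mathbf{0} & \tfrac{1}{G}\mathbf{A}_\mathrm{T}\end{bmatrix},\qquad \mathbf{A}_\mathrm{R}=\begin{bmatrix}R_\theta & X_{\theta,\phi}\\ X_{\theta,\phi}&R_\phi\end{bmatrix},\ \mathbf{A}_\mathrm{T}=\begin{bmatrix}L_\theta & Y_{\theta,\phi}\\ Y_{\theta,\phi}&L_\phi\end{bmatrix}.
\end{equation*}
Since permutations do not change the diagonal of the inverse (up to the same permutation), the DOA bounds come from $\mathbf{A}_\mathrm{R}^{-1}$ and the DOD bounds from $\mathbf{A}_\mathrm{T}^{-1}$, scaled by $G$ and $1/G$ respectively.

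Then I would apply the standard $2\times 2$ inversion formula $\begin{bsmallmatrix}a&b\\b&c\end{bsmallmatrix}^{-1}=\frac{1}{ac-b^2}\begin{bsmallmatrix}c&-b\\-b&a\end{bsmallmatrix}$. For the receiver block, $\det\mathbf{A}_\mathrm{R}=R_\theta R_\phi-X_{\theta,\phi}^2=\zeta_1$, so the $(1,1)$ and $(2,2)$ entries of $(G\mathbf{A}_\mathrm{R})^{-1}/(\gamma|\beta|^2)$ yield $\CRLB(\theta_\mathrm{R})=R_\phi/(\gamma\zeta_1|\beta|^2 G)$ and $\CRLB(\phi_\mathrm{R})=R_\theta/(\gamma\zeta_1|\beta|^2 G)$. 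For the transmitter block, $\det\mathbf{A}_\mathrm{T}=L_\theta L_\phi-Y_{\theta,\phi}^2=\zeta_2$, and the extra $1/G$ prefactor produces an overall $G$ in the inverse, giving $\CRLB(\theta_\mathrm{T})=GL_\phi/(\gamma\zeta_2|\beta|^2)$ and $\CRLB(\phi_\mathrm{T})=GL_\theta/(\gamma\zeta_2|\beta|^2)$, matching the claimed expressions.

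There is no real obstacle here; the only step worth checking carefully is the permutation argument in the second paragraph, since a wrong pairing would mix RX and TX parameters and destroy the $2\times 2$ decomposition. Once that structural observation is made, everything else is a two-line $2\times 2$ inverse and a reading off of the diagonal.
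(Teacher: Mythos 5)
Your proposal is correct and follows essentially the same route as the paper's Appendix~\ref{app:CRLB}: the paper likewise obtains $\CRLB(\tau)$ from the decoupled scalar block and then ``utilizes the independence between DOA and DOD to re-order $\mathbf{J}^\mathrm{e}_{\boldsymbol{\theta},\boldsymbol{\phi}}$'' into the two $2\times2$ blocks of \eqref{eq:crlb:expressions}, which it inverts exactly as you do. The only difference is that the paper additionally re-derives \eqref{eq:partitioned} by a Schur complement over the nuisance parameters $(\beta_\mathrm{R},\beta_\mathrm{I})$, a step you are entitled to skip since the proposition is stated as conditional on \eqref{eq:partitioned}.
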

\begin{proof}
	See Appendix \ref{app:CRLB}.
\end{proof}

\textcolor{black}{Note that the CRLBs consists of two components: Firstly, there is an SNR component represented by $\gamma|\beta|^2~G$ for the receiver angles, and by $G/(\gamma|\beta|^2)$ for the transmitter angles. This component is inversely proportional to the CRLBs, which means with higher SNR, the CRLBs of the channel parameters decrease. Secondly, there is a spatial information part in the parentheses containing the equivalent information after removing the dependence on the other parameter. }

Recall that for 2D localization, when the UE and BS are located in the $xy$-plane. Thus, ignoring the terms relating to the coupling between $\theta$ and $\phi$ in Proposition 1 leads to
 \begin{subequations}\label{eq:closed_CRLB2D}
	\begin{align}
	\CRLB(\phi_{\mathrm{R}})&=\frac{1}{\gamma|\beta|^2 R_\phi G},\qquad\qquad
	\CRLB(\phi_\mathrm{T})=\frac{G}{\gamma|\beta|^2L_\phi },
	\end{align}
\end{subequations}
while $\CRLB(\tau)$ is unchanged.

\section{Fisher Information of the Location Parameters}\label{sec:PEB_OEB}
In the preceding sections, we have seen how the FIM of the multipath channel parameters can be approximated by multiple single-path FIMs. We have also derived the single-path FIM for different settings. In this section, we derive the PEB and the OEB by applying a  transformation \cite{kay1993} to the EFIM of DOA, DOD, and TOA, computed from \eqref{eq:matrix_I}, to obtain the exact FIM of position and orientation. We also transform $\mathbf{J}_{{\boldsymbol{\varphi}}_\mathrm{ch}}$,  defined in (\ref{eq:partitioned}),  to obtain the approximate one.

\subsection{PEB and OEB: Exact Approach}
\subsubsection{General Formulation}\label{sec:general_formulation}

In this section, we derive the PEB and OEB based on the EFIM of the multipath channel parameters of interest $\boldsymbol{\varphi}_{\mathrm{CH}}\triangleq[\boldsymbol{\theta}^\mathrm{T},\boldsymbol{\phi}^\mathrm{T},\boldsymbol\tau^{\mathrm{T}}]^\mathrm{T}$. We do so by first transforming  $\mathbf{J}^\mathrm{e}_{\boldsymbol\varphi_\mathrm{CH}}$ to a FIM of the location parameters $\boldsymbol{\varphi}_{\mathrm{L}}\triangleq[\mathbf{o}^\mathrm{T},\mathbf{p}^\mathrm{T},\mathbf{q}^\mathrm{T}]^\mathrm{T}$, where
 \begin{align}\notag
	\boldsymbol{\theta}=\left[\boldsymbol{\theta}_{\mathrm{R}}^\mathrm{T},\boldsymbol{\theta}_\mathrm{T}^\mathrm{T}\right]^\mathrm{T},\qquad
	\boldsymbol{\phi}=\left[\boldsymbol{\phi}_{\mathrm{R}}^\mathrm{T},\boldsymbol{\phi}_\mathrm{T}^\mathrm{T}\right]^\mathrm{T}, \qquad
	\mathbf{q}\triangleq\left[\mathbf{q}_2^\mathrm{T},\mathbf{q}_3^\mathrm{T},\cdots,\mathbf{q}_M^\mathrm{T}\right]^\mathrm{T}.
	\end{align}
Towards that, we write
\begin{align}
\mathbf{J}_{\boldsymbol{\varphi}_{\mathrm{L}}}&\triangleq\boldsymbol{\Upsilon}\mathbf{J}^\mathrm{e}_{\boldsymbol{\varphi}_{\mathrm{CH}}}\boldsymbol{\Upsilon}^\mathrm{T}\triangleq\begin{bmatrix}
\boldsymbol\Psi&\boldsymbol\Phi\\
\boldsymbol\Phi^\mathrm{T}&\boldsymbol\Omega
\end{bmatrix},\label{eq:tranformation}
\end{align}
where $\boldsymbol{\Psi}\in\mathbb{R}^{5\times{5}}$, $\boldsymbol{\Omega}\in\mathbb{R}^{3(M-1)\times{3(M-1)}}$ and
	\begin{align}
\boldsymbol{\Upsilon}&\triangleq\frac{\partial\boldsymbol{\varphi}^\mathrm{T}_\mathrm{CH}}{\partial\boldsymbol{\varphi}_\mathrm{L}}=\begin{bmatrix}
\frac{\partial\boldsymbol{\theta}^\mathrm{T}}{\partial\mathbf{o}}& \frac{\partial\boldsymbol{\phi}^\mathrm{T}}{\partial\mathbf{o}}&\frac{\partial{\boldsymbol\tau^{\mathrm{T}}}}{\partial\mathbf{o}}\\
\frac{\partial\boldsymbol{\theta}^\mathrm{T}}{\partial\mathbf{p}}& \frac{\partial\boldsymbol{\phi}^\mathrm{T}}{\partial\mathbf{p}}&\frac{\partial{\boldsymbol\tau^{\mathrm{T}}}}{\partial\mathbf{p}}\\
\frac{\partial\boldsymbol{\theta}^\mathrm{T}}{\partial\mathbf{q}}& \frac{\partial\boldsymbol{\phi}^\mathrm{T}}{\partial\mathbf{q}}&\frac{\partial{\boldsymbol\tau^{\mathrm{T}}}}{\partial\mathbf{q}}\\
\end{bmatrix}.\label{eq:matrix_t}
\end{align}	
Consequently, the EFIM of the position $\mathbf{p}$ and orientation $\mathbf{o}$ is found via Schur's complement as
\begin{align}
{\mathbf{J}}^\mathrm{e}_{\mathbf{o,p}}=\boldsymbol\Psi-\boldsymbol\Phi\boldsymbol\Omega^{-1}\boldsymbol\Phi^\mathrm{T}.
\end{align}
Finally, the squared-PEB (SPEB) and squared-OEB (SOEB) are defined as
\begin{subequations}
	\begin{align}
\mathrm{SOEB}&=\left[({{\mathbf{J}}^\mathrm{e}_{\mathbf{o,p}}})^{-1}\right]_{1,1} + \left[({{\mathbf{J}}^\mathrm{e}_{\mathbf{o,p}}})^{-1}\right]_{2,2} \label{eq:OEB_def},\\
\mathrm{SPEB}&=\left[({{\mathbf{J}}^\mathrm{e}_{\mathbf{o,p}}})^{-1}\right]_{3,3}+\left[({{\mathbf{J}}^\mathrm{e}_{\mathbf{o,p}}})^{-1}\right]_{4,4}+\left[({{\mathbf{J}}^\mathrm{e}_{\mathbf{o,p}}})^{-1}\right]_{5,5}.\label{eq:PEB_def}
\end{align}
\end{subequations}
\subsubsection{Transformation for Uplink and Downlink}\label{sec:rotation_matrix}
The relationships governing the UE position and orientation with the BS and UE angles are different. Therefore, unlike $\mathbf{J}_{\boldsymbol{\varphi}}$, the structure of $\boldsymbol{\Upsilon}$ and, effectively, ${\mathbf{J}}^\mathrm{e}_{\mathbf{o,p}}$, depends on whether the uplink or downlink is used for signal transmission. For this reason, we switch to the explicit notation with the subscripts $\mathrm{BS}$ and $\mathrm{UE}$,
\begin{align}
\boldsymbol{\theta}=\begin{cases}
\left[\boldsymbol{\theta}_\mathrm{BS}^\mathrm{T},\boldsymbol{\theta}_\mathrm{UE}^\mathrm{T}\right]^\mathrm{T}, &\mathrm{uplink}\\
\left[\boldsymbol{\theta}_\mathrm{UE}^\mathrm{T},\boldsymbol{\theta}_\mathrm{BS}^\mathrm{T}\right]^\mathrm{T},&\mathrm{downlink}\\
\end{cases},\qquad\qquad
\boldsymbol{\phi}=\begin{cases}
\left[\boldsymbol{\phi}_\mathrm{BS}^\mathrm{T},\boldsymbol{\phi}_\mathrm{UE}^\mathrm{T}\right]^\mathrm{T}, &\mathrm{uplink}\\
\left[\boldsymbol{\phi}_\mathrm{UE}^\mathrm{T},\boldsymbol{\phi}_\mathrm{BS}^\mathrm{T}\right]^\mathrm{T},&\mathrm{downlink}.
\end{cases}
\end{align}
where $\boldsymbol{\phi}_\mathrm{BS}$ and $\boldsymbol{\theta}_\mathrm{BS}$ denote the vectors of the azimuth and elevation angles at the BS, and $\boldsymbol{\phi}_\mathrm{UE}$ and $\boldsymbol{\theta}_\mathrm{UE}$ are the azimuth and elevation angles  at the UE.
\begin{figure}[!t]
		\vspace{-2mm}
		\centering
	\includegraphics[scale=0.8]{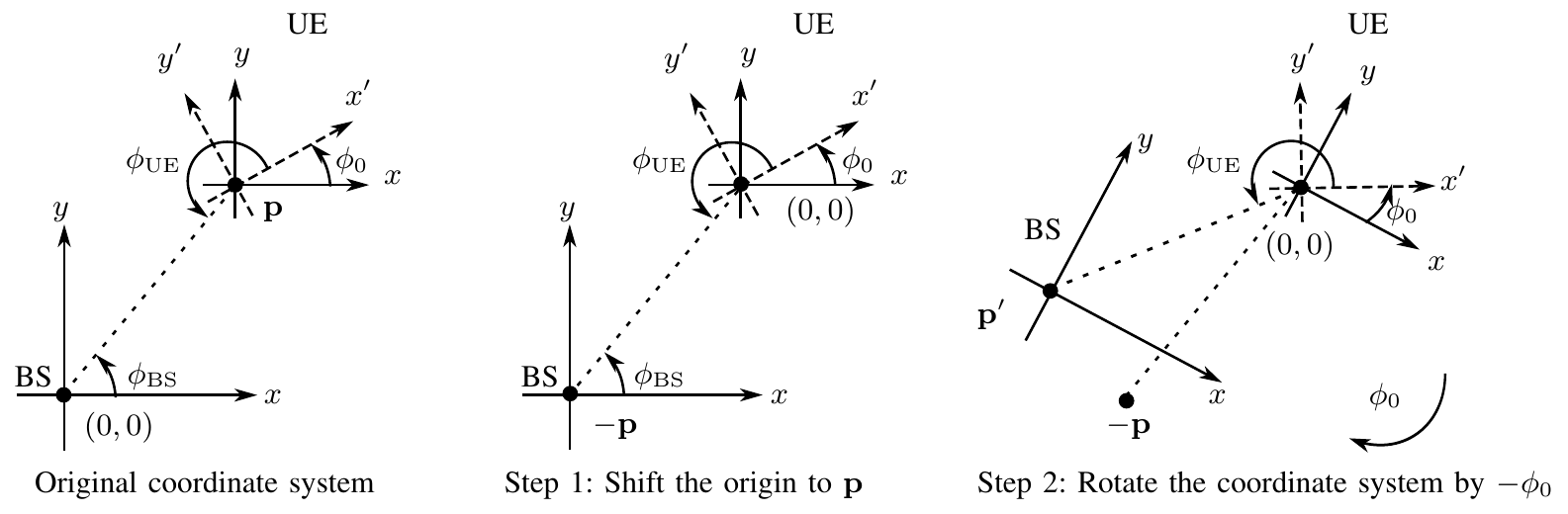}
	\vspace{-2mm}
	\caption{Two-step derivation of the UE angle in 2D. It is easy to see that $\phi_\mathrm{UE}=\tan^{-1}\left({p_y'}/{p_x'}\right)$, where $\mathbf{p}'=-{\mathbf{{R}}}_z(-\phi_0)\mathbf{p}$.}
	\label{fig:rotation_geo}
	\vspace{-8mm}
\end{figure}

\textcolor{black}{Starting with LOS and using the spherical coordinates, it can be seen from Fig.~\ref{fig:scenario} that
\begin{subequations}\label{LOS_geometry_BS}
	\begin{align}
	\theta_{\mathrm{BS},1}=\cos^{-1}\left({p_z}/{\|\mathbf{p}\|}\right),\qquad
	\phi_{\mathrm{BS},1}=\tan^{-1}\left({p_y/p_x}\right),\qquad
	\tau_1={\|\mathbf{p}\|}/{c}.
	\end{align}
\end{subequations}
However, the relationship of the UE angles with the position and orientation angles are not as obvious. Therefore, we resort to the two-step procedure illustrated in Fig.~\ref{fig:rotation_geo} for 2D, but easily extensible to 3D. In the first step, we shift the coordinate system origin to the UE, hence, the BS is shifted to $-\mathbf{p}$. In the second step, the coordinate system is rotated in the negative direction of the orientation angle ($\phi_0$). Consequently, the BS location is also rotated, and the UE angles are then taken as the spherical coordinates of the new BS location. Mathematically, this location is given by $
\mathbf{p}'=-\mathbf{R}_z(-\phi_0)\mathbf{p}=-\mathbf{R}_z^{-1}(\phi_0)\mathbf{p},$
where $\mathbf{R}_z(\phi_0)$ is the rotation matrix in the direction $\phi_0$ around the positive $z$-axis. Generalizing this result to the 3D case yields, 
\begin{align}
\mathbf{p}'=-\mathbf{R}^{-1}(\theta_0,\phi_0)\mathbf{p}.
\end{align}
Consequently, defining $\mathbf{p}'\triangleq[p_x',p_y',p_z']^\mathrm{T}$ and noting that $\|\mathbf{p}\|=\|\mathbf{p}'\|$, we write
\begin{subequations}\label{LOS_geometry_UE}
	\begin{align}
	\theta_{\mathrm{UE},1}=\cos^{-1}\left({p'_z}/{\|\mathbf{p}\|}\right),\qquad	\phi_{\mathrm{UE},1}=\tan^{-1}\left({p_y'/p'_x}\right).
	\end{align}
\end{subequations}
With the right-hand rule in mind, the rotation considered in this paper (See Fig.~\ref{fig:scenario}) is a rotation by $\phi_0$ around the $z$-axis, followed by another rotation by $\theta_0$ around the {\textit{negative}} $x'$-axis. Thus, the rotation matrix is given by \cite{vince2011rotation} 
\begin{align}\label{eq:rotation_matrix}
\mathbf{R}(\theta_0,\phi_0)=\mathbf{R}_z(\phi_0)\mathbf{R}_{-x'}(\theta_0)=\begin{bmatrix}
\cos\phi_0&-\sin\phi_0\cos\theta_0&-\sin\phi_0\sin\theta_0\\
\sin\phi_0&\cos\phi_0\cos\theta_0&\cos\phi_0\sin\theta_0\\
0&-\sin\theta_0&\cos\theta_0
\end{bmatrix}.
\end{align}
Note that $\mathbf{R}(\theta_0,\phi_0)$ is orthogonal and hence satisfies $\mathbf{R}^{-1}(\theta_0,\phi_0)=\mathbf{R}^\mathrm{T}(\theta_0,\phi_0)$.}

\textcolor{black}{Next, considering the NLOS paths ($2\leq{m}\leq{M}$) and using the same procedure, the following relations can be obtained
	\begin{align}\label{NLOS_geometry}
	\begin{aligned}	\theta_{\mathrm{UE},m}&=\cos^{-1}\left({w'_{m,z}}/{\|\mathbf{w}_m\|}\right),\\
	\phi_{\mathrm{UE},m}&=\tan^{-1}\left({w'_{m,y}/w'_{m,x}}\right),\\
		\tau_m&=\left(\|\mathbf{q}_m\|+\|\mathbf{w}_m\|\right)/{c},
	\end{aligned}\qquad
	\begin{aligned}		\theta_{\mathrm{BS},m}&=\cos^{-1}\left({q_{m,z}}/{\|\mathbf{q}_m\|}\right),\\
	\phi_{\mathrm{BS},m}&=\tan^{-1}\left({q_{m,y}/q_{m,x}}\right),\\
	\mathbf{w}_m&=\mathbf{p}-\mathbf{q}_m,
\end{aligned}
	\end{align}
where  $\mathbf{w}'_m\triangleq[w'_{m,x},w'_{m,y},w'_{m,z}]^\mathrm{T}=-\mathbf{R}^\mathrm{T}(\theta_0,\phi_0)\mathbf{w}_m$. Based on \eqref{LOS_geometry_BS}, \eqref{LOS_geometry_UE}, and \eqref{NLOS_geometry}, the non-zero elements of $\boldsymbol{\Upsilon}$ are listed in Appendix \ref{app:derivatives}.}

\subsection{PEB and OEB: Approximate Approach}
In Section \ref{sec:approx_crlb}, it was concluded that, under certain conditions, the multiple paths arriving at the receiver can be treated as non-interfering paths carrying independent information. Thus, we can write the total FIM of position and orientation as a sum of the individual FIMs obtained by transforming the CRLBs of DODs, DOAs, and TOAs
of these paths individually.

\begin{proposition}
Let $\mathbf{J}^{(m)}_{\boldsymbol{\varphi}_s}$ be the FIM of the m$^{\mathrm{th}}$ path parameters depicted in Fig. \ref{fig:FIM_reorder} (left), so that
\begin{align}\label{eq:FIM_reordered}
\mathbf{J}^{(m)}_{\boldsymbol{\varphi}_s}\triangleq\begin{bmatrix}
\boldsymbol{\Lambda}_m&\boldsymbol{\Pi}_m\\
\boldsymbol{\Pi}^\mathrm{T}_m&\boldsymbol{\Xi}_m\\
\end{bmatrix},
\end{align}
where $\boldsymbol{\Lambda}_m\in\mathbb{R}^{5\times{5}}$ is the FIM of $\theta_{\mathrm{R},m},\theta_{\mathrm{T},m},\phi_{\mathrm{R},m},\phi_{\mathrm{T},m},$ and $\tau_{m}$, $\boldsymbol{\Xi}_m\in\mathbb{R}^{2\times{2}}$ is the FIM of $\beta_{\mathrm{R},m}$ and $\beta_{\mathrm{I},m}$, and \textcolor{black}{ $\boldsymbol{\Pi}_m\in\mathbb{R}^{5\times{2}}$ is the mutual information matrix of parameters of $\boldsymbol{\Lambda}_m$ and $\boldsymbol{\Xi}_m$ }. Moreover, denote the EFIM of the m$^{\mathrm{th}}$ DOA, DOD, and TOA by
\begin{align}
\boldsymbol{\Lambda}^\mathrm{e}_m=\boldsymbol{\Lambda}_m-\boldsymbol{\Pi}_m\boldsymbol{\Xi}^{-1}_m\boldsymbol{\Pi}^\mathrm{T}_m,
\end{align}
and the corresponding transformation matrix in block form by
\begin{align}
\boldsymbol{\Upsilon}_m\triangleq\begin{cases}
\overline{\boldsymbol{\Upsilon}}_1,&m=1\\
\begin{bmatrix}
\overline{\boldsymbol{\Upsilon}}_m^\mathrm{T}&\overline{\overline{\boldsymbol{\Upsilon}}}_m^\mathrm{T}\\
\end{bmatrix}^\mathrm{T}, &2\leq m\leq{M}
\end{cases}
\end{align}
where $\overline{\boldsymbol{\Upsilon}}_m$ is the $5\times{5}$ matrix relating to $\mathbf{o}$ and $\mathbf{p}$, and $\overline{\overline{\boldsymbol{\Upsilon}}}_m$ is the $3\times{5}$ matrix relating to $\mathbf{q}_m$.
Then, the approximate EFIM of $\mathbf{o}$ and $\mathbf{p}$ is given by
\begin{align}
\tilde{\mathbf{J}}^{\mathrm{e}}_{\mathbf{o,p}}\triangleq\sum_{m=1}^{M}\mathbf{J}_{\mathbf{o,p}}^{(m)}=&\sum_{m=1}^{M}\overline{\boldsymbol{\Upsilon}}_m\boldsymbol{\Lambda}_m\overline{\boldsymbol{\Upsilon}}_m^\mathrm{T}\notag\\-&\underbrace{\sum_{m=1}^{M}\overline{\boldsymbol{\Upsilon}}_m\boldsymbol{\Pi}_m\boldsymbol{\Xi}^{-1}_m\boldsymbol{\Pi}^\mathrm{T}_m\overline{\boldsymbol{\Upsilon}}_m^\mathrm{T}}_{\mathrm{path\ gains\ uncertainty}}-\underbrace{\sum_{m=2}^{M}\overline{\boldsymbol{\Upsilon}}_m\boldsymbol{\Lambda}^\mathrm{e}_m\overline{\overline{\boldsymbol{\Upsilon}}}_m^\mathrm{T}\left(\overline{\overline{\boldsymbol{\Upsilon}}}\boldsymbol{\Lambda}^\mathrm{e}_m\overline{\overline{\boldsymbol{\Upsilon}}}^\mathrm{T}\right)^{-1}\overline{\overline{\boldsymbol{\Upsilon}}}_m\boldsymbol{\Lambda}^\mathrm{e}_m\overline{\boldsymbol{\Upsilon}}_m^\mathrm{T}}_{\mathrm{clusters\ locations\ uncertainty}}.\label{eq:Proposition2}
\end{align}
\end{proposition}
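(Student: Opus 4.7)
The plan is to exploit the approximate block-diagonal structure of the FIM in Fig.~\ref{fig:FIM_reorder} (right), where, after reordering the channel parameters path-by-path, the FIM is a block-diagonal matrix whose $m$-th block is $\mathbf{J}^{(m)}_{\boldsymbol{\varphi}_s}$ as in \eqref{eq:FIM_reordered}. From there I would push this information into the location-parameter domain through the geometry-induced Jacobian, and finally eliminate the nuisance parameters $\boldsymbol{\beta}$ and $\mathbf{q}$ by Schur complement.

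The key observation is the sparsity of the global change of variables. By the geometry of Sections \ref{sec:prob_form} and \ref{sec:rotation_matrix}, the channel parameters of path $m$ depend on $\mathbf{o}$ and $\mathbf{p}$ always and on $\mathbf{q}_m$ only when $m\geq 2$; moreover $\beta_m$ enters only its own path's block. Thus the global Jacobian from stacked channel parameters to $(\boldsymbol{\varphi}_L,\boldsymbol{\beta})$ is block-sparse: path $m$'s angular/delay Jacobian $\boldsymbol{\Upsilon}_m$ appears only in the rows of $\mathbf{o},\mathbf{p}$ (and $\mathbf{q}_m$ if $m\geq 2$), while the $\beta_m$-rows form an identity block. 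Combining this with the path-wise block diagonality of the channel FIM immediately gives
\begin{equation*}
\mathbf{J}_{(\boldsymbol{\varphi}_L,\boldsymbol{\beta})}=\sum_{m=1}^{M}\boldsymbol{\Upsilon}^{\mathrm{ext}}_m\,\mathbf{J}^{(m)}_{\boldsymbol{\varphi}_s}\,(\boldsymbol{\Upsilon}^{\mathrm{ext}}_m)^{\mathrm{T}},
\end{equation*}
where $\boldsymbol{\Upsilon}^{\mathrm{ext}}_m$ denotes the padded Jacobian. The location FIM is therefore, structurally, a sum of per-path contributions.

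I would then eliminate the nuisance parameters in two decoupled Schur-complement steps. First, because each $\beta_m$ appears only in path $m$'s block, the $\boldsymbol{\beta}$-elimination factorizes and replaces $\boldsymbol{\Lambda}_m$ by $\boldsymbol{\Lambda}^{\mathrm{e}}_m=\boldsymbol{\Lambda}_m-\boldsymbol{\Pi}_m\boldsymbol{\Xi}^{-1}_m\boldsymbol{\Pi}^{\mathrm{T}}_m$ inside each term, producing upon expansion the first two sums of \eqref{eq:Proposition2}: the direct contribution $\overline{\boldsymbol{\Upsilon}}_m\boldsymbol{\Lambda}_m\overline{\boldsymbol{\Upsilon}}_m^{\mathrm{T}}$ and the path-gain uncertainty correction $\overline{\boldsymbol{\Upsilon}}_m\boldsymbol{\Pi}_m\boldsymbol{\Xi}^{-1}_m\boldsymbol{\Pi}^{\mathrm{T}}_m\overline{\boldsymbol{\Upsilon}}_m^{\mathrm{T}}$. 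Second, since $\mathbf{q}_m$ with $m\geq 2$ also enters only the $m$-th summand, the $\mathbf{q}\mathbf{q}$ block remains block-diagonal in $m$ after the first reduction; Schur complement on this block then splits per path into $\overline{\boldsymbol{\Upsilon}}_m\boldsymbol{\Lambda}^{\mathrm{e}}_m\overline{\overline{\boldsymbol{\Upsilon}}}_m^{\mathrm{T}}(\overline{\overline{\boldsymbol{\Upsilon}}}_m\boldsymbol{\Lambda}^{\mathrm{e}}_m\overline{\overline{\boldsymbol{\Upsilon}}}_m^{\mathrm{T}})^{-1}\overline{\overline{\boldsymbol{\Upsilon}}}_m\boldsymbol{\Lambda}^{\mathrm{e}}_m\overline{\boldsymbol{\Upsilon}}_m^{\mathrm{T}}$, which is the cluster-location uncertainty term (absent for $m=1$, since the LOS path has no associated $\mathbf{q}_1$).

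The main obstacle is the bookkeeping that makes the two Schur reductions term-by-term. Specifically, I would need to verify that removing $\boldsymbol{\beta}$ first does not spoil the block-diagonality of the $\mathbf{q}\mathbf{q}$ block (it does not, because $\boldsymbol{\beta}$ is itself path-separable), and that no cross-path terms arise even though $\mathbf{o}$ and $\mathbf{p}$ are shared across paths---this is automatic because $\mathbf{o}$ and $\mathbf{p}$ are the parameters of interest rather than nuisance parameters being eliminated. With this verified, the three-sum decomposition in \eqref{eq:Proposition2} follows.
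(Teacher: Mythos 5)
Your proposal is correct and follows essentially the same route as the paper: exploit the block-diagonal reordered channel FIM and the block-sparse Jacobian to write the location FIM as a sum of per-path contributions, then observe that the $\mathbf{q}\mathbf{q}$ block is block-diagonal so the Schur-complement correction splits path by path. The only cosmetic difference is that you eliminate $\boldsymbol{\beta}$ after transforming to the location domain while the paper forms $\boldsymbol{\Lambda}^{\mathrm{e}}_m$ in the channel domain first; since $\beta_m$ is untouched by the reparametrization these orderings give identical expressions.
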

\begin{proof}
	See Appendix \ref{app:OP_FIM}.
\end{proof}
We make the following remarks form \eqref{eq:Proposition2}. Firstly, due to the additive nature of the EFIM, the FIM of the useful localization information (TOA, DOA, DOD) of the $M$ paths accumulate positively to construct the first term. On the other hand, the channel gain, $\beta_m$, is a nuisance unknown parameter which needs to be estimated despite not being useful for localization. Not knowing $\beta_m$ decreases the amount of available information as highlighted by the negative second term comprising $\boldsymbol{\Pi}_m$, the mutual information relating $\beta$ with TOA, DOA, and DOD. Finally, since $m=1$ is assumed to be a LOS path, the third term is defined starting from $m=2$. This term is also negative, indicating the information loss due to the unknown clusters' locations, $\mathbf{q}_m$.

\subsection{Closed-Form Expressions for LOS}\label{sec:OEB_PEB_closed_form}
Although it is hard to derive closed-form solutions of the general case of PEB and OEB, here we present expressions for the LOS case ($M=1$). 
\begin{proposition}
\textcolor{black}{For the problem set in Section \ref{sec:prob_form}, in the existence of a LOS path only, the {3D} localization PEB and OEB of a UE located at $\mathbf{p}$ with an orientation angle $\mathbf{o}$ are given by
		\begin{subequations}\label{eq:SOEB_SPEB_3D}
	\begin{align}
	\mathrm{SPEB}=&\|\mathbf{p}\|^2\CRLB(\theta_\mathrm{BS})+\|\mathbf{p}\|^2\sin^2\theta_\mathrm{BS}\CRLB(\phi_\mathrm{BS})+c^2\CRLB(\tau),\\
	\mathrm{SOEB}=&b_1\CRLB(\theta_\mathrm{BS})+b_2\CRLB(\phi_\mathrm{BS})+b_3\sigma^2_{\theta_\mathrm{BS}\phi_\mathrm{BS}}\notag\\
	+&b_4\CRLB(\theta_\mathrm{UE})+b_5\CRLB(\phi_\mathrm{UE})+b_6\sigma^2_{\theta_\mathrm{UE}\phi_\mathrm{UE}}.
	\end{align}
\end{subequations}
where $\sigma^2_{\theta_\mathrm{BS}\phi_\mathrm{BS}}$ and $\sigma^2_{\theta_\mathrm{UE}\phi_\mathrm{UE}}$ are covariance terms arising from the mutual information of the angles in the subscript, and $b_1,...,b_6$ are as given in Appendix \ref{app:LOS_PEB_OEB}.} 
\end{proposition}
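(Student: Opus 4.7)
The plan is to specialize the transformation in Section~\ref{sec:general_formulation} to $M=1$, where $\mathbf{q}$ is empty, $\boldsymbol{\varphi}_L = [\mathbf{o}^T, \mathbf{p}^T]^T \in \mathbb{R}^5$, the Schur complement involving $\boldsymbol{\Omega}$ disappears, and $\mathbf{J}^e_{\mathbf{o},\mathbf{p}} = \boldsymbol{\Upsilon}\,\mathbf{J}^e_{\boldsymbol{\varphi}_\mathrm{ch}}\,\boldsymbol{\Upsilon}^T$ directly, with $\mathbf{J}^e_{\boldsymbol{\varphi}_\mathrm{ch}}$ given by \eqref{eq:partitioned}--\eqref{eq:EFIM_DOD_DOA}. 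First I would build $\boldsymbol{\Upsilon}$ from the derivatives in Appendix~\ref{app:derivatives} using the geometric relations \eqref{LOS_geometry_BS} and \eqref{LOS_geometry_UE}, and exploit the key structural fact that the BS-side parameters $\theta_{\mathrm{BS},1}, \phi_{\mathrm{BS},1}, \tau_1$ depend only on $\mathbf{p}$, whereas the UE-side parameters $\theta_{\mathrm{UE},1}, \phi_{\mathrm{UE},1}$ depend on both $\mathbf{p}$ and $\mathbf{o}$ through the rotation matrix $\mathbf{R}(\theta_0, \phi_0)$ in \eqref{eq:rotation_matrix}. In particular, the $\mathbf{o}$-rows of $\boldsymbol{\Upsilon}$ vanish on the BS-angle and $\tau$ columns.

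Writing
\begin{align*}
\mathbf{J}^e_{\mathbf{o},\mathbf{p}} = \begin{bmatrix} \mathbf{A}_{\mathbf{oo}} & \mathbf{A}_{\mathbf{op}} \\ \mathbf{A}_{\mathbf{op}}^T & \mathbf{A}_{\mathbf{pp}} \end{bmatrix},
\end{align*}
I would then compute $\mathrm{SPEB} = \Tr\bigl((\mathbf{A}_{\mathbf{pp}} - \mathbf{A}_{\mathbf{op}}^T\mathbf{A}_{\mathbf{oo}}^{-1}\mathbf{A}_{\mathbf{op}})^{-1}\bigr)$ and $\mathrm{SOEB} = \Tr\bigl((\mathbf{A}_{\mathbf{oo}} - \mathbf{A}_{\mathbf{op}}\mathbf{A}_{\mathbf{pp}}^{-1}\mathbf{A}_{\mathbf{op}}^T)^{-1}\bigr)$. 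For the SPEB the essential simplification is that the UE-angle contributions to $\mathbf{A}_{\mathbf{pp}}$ cancel against the corresponding contributions in $\mathbf{A}_{\mathbf{op}}^T\mathbf{A}_{\mathbf{oo}}^{-1}\mathbf{A}_{\mathbf{op}}$: the UE-angle blocks of $\partial[\theta_{\mathrm{UE}},\phi_{\mathrm{UE}}]/\partial\mathbf{p}$ and $\partial[\theta_{\mathrm{UE}},\phi_{\mathrm{UE}}]/\partial\mathbf{o}$ both arise from the same chain rule through $\mathbf{R}(\theta_0, \phi_0)$, so the cross-Schur cancellation leaves only the pure BS contribution $\mathbf{J}^e_{\mathbf{p}} = \mathbf{J}_{\mathrm{BS}}^T\,\mathbf{J}^e_{\boldsymbol{\varphi}_{\mathrm{BS}}}\,\mathbf{J}_{\mathrm{BS}}$, where $\mathbf{J}_{\mathrm{BS}} = \partial[\theta_{\mathrm{BS}}, \phi_{\mathrm{BS}}, \tau]^T/\partial\mathbf{p}^T$ is the spherical-to-Cartesian Jacobian and $\mathbf{J}^e_{\boldsymbol{\varphi}_{\mathrm{BS}}}$ is the BS block of $\mathbf{J}^e_{\boldsymbol{\varphi}_\mathrm{ch}}$. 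Inverting and tracing this using the spherical metric $d\mathbf{p}^T d\mathbf{p} = dr^2 + r^2 d\theta^2 + r^2 \sin^2\theta\,d\phi^2$ (with $r = \|\mathbf{p}\| = c\tau$) yields the claimed SPEB expression.

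For the SOEB I would carry out the inversion of the 5$\times$5 FIM while retaining the full coupling between $\mathbf{o}$ and $\mathbf{p}$. Since the UE angles depend on both $\mathbf{o}$ and $\mathbf{p}$, the Schur complement $\mathbf{A}_{\mathbf{oo}} - \mathbf{A}_{\mathbf{op}}\mathbf{A}_{\mathbf{pp}}^{-1}\mathbf{A}_{\mathbf{op}}^T$ retains contributions from all four angles together with the off-diagonal $X_{\theta,\phi}$ and $Y_{\theta,\phi}$ couplings in \eqref{eq:EFIM_DOD_DOA}, which give rise to the two covariance terms $\sigma^2_{\theta_\mathrm{BS}\phi_\mathrm{BS}}$ and $\sigma^2_{\theta_\mathrm{UE}\phi_\mathrm{UE}}$ in the SOEB expression. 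Reading off the scalar multipliers of these six independent variance/covariance terms furnishes the coefficients $b_1, \ldots, b_6$ tabulated in Appendix~\ref{app:LOS_PEB_OEB}.

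The main obstacle will be tracking the cross terms in the SOEB derivation: the UE-angle rotation mixes $\theta_0$ and $\phi_0$ with all three position components in a nonlinear way via \eqref{eq:rotation_matrix}, so the four partial derivatives $\partial\theta_{\mathrm{UE}}/\partial\mathbf{o}$, $\partial\phi_{\mathrm{UE}}/\partial\mathbf{o}$, $\partial\theta_{\mathrm{UE}}/\partial\mathbf{p}$, $\partial\phi_{\mathrm{UE}}/\partial\mathbf{p}$ each produce several terms in the Schur complement and must be combined carefully to recover the clean LOS form. The SPEB portion is structurally cleaner because the cancellation reduces the 3D position FIM to a diagonal FIM in spherical coordinates centered at the BS that is independent of both orientation and UE-side channel information.
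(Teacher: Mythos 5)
Your proposal is correct in substance but takes a genuinely different route from the paper. The paper exploits the fact that for $M=1$ the map $\boldsymbol{\varphi}_\mathrm{CH}=[\theta_\mathrm{BS},\theta_\mathrm{UE},\phi_\mathrm{BS},\phi_\mathrm{UE},\tau]\mapsto[\mathbf{o},\mathbf{p}]$ is a bijection between two five-dimensional vectors, so $\boldsymbol{\Upsilon}$ is square and invertible and, by the inverse function theorem, $\mathbf{J}_{\mathbf{o},\mathbf{p}}^{-1}=(\partial\boldsymbol{\varphi}_\mathrm{L}^\mathrm{T}/\partial\boldsymbol{\varphi}_\mathrm{CH})^\mathrm{T}(\boldsymbol{\Lambda}_1^\mathrm{e})^{-1}(\partial\boldsymbol{\varphi}_\mathrm{L}^\mathrm{T}/\partial\boldsymbol{\varphi}_\mathrm{CH})$ in one step: no Schur complements are ever formed, the needed weights are derivatives of $\mathbf{p}$ and $\mathbf{o}$ with respect to the \emph{channel} parameters (trivial for $\mathbf{p}$ via spherical coordinates, and obtained by implicit differentiation for $\mathbf{o}$), and the entries of $(\boldsymbol{\Lambda}_1^\mathrm{e})^{-1}$ are directly the CRLBs and covariances appearing in \eqref{eq:SOEB_SPEB_3D}. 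You instead build the forward-transformed FIM from the Appendix~\ref{app:derivatives} derivatives and eliminate $\mathbf{o}$ (resp.\ $\mathbf{p}$) by block Schur complements. This works, and your claimed cancellation in the SPEB is real, but its correct justification is the algebraic identity
\begin{align}
\mathbf{H}\mathbf{M}\mathbf{G}^\mathrm{T}\left(\mathbf{G}\mathbf{M}\mathbf{G}^\mathrm{T}\right)^{-1}\mathbf{G}\mathbf{M}\mathbf{H}^\mathrm{T}=\mathbf{H}\mathbf{M}\mathbf{H}^\mathrm{T},
\end{align}
where $\mathbf{G}=\partial[\theta_\mathrm{UE},\phi_\mathrm{UE}]^\mathrm{T}/\partial\mathbf{o}$, $\mathbf{H}=\partial[\theta_\mathrm{UE},\phi_\mathrm{UE}]^\mathrm{T}/\partial\mathbf{p}$ and $\mathbf{M}$ is the UE-angle block of $\boldsymbol{\Lambda}_1^\mathrm{e}$ (which decouples from the BS-angle block and from $\tau$ in \eqref{eq:EFIM_DOD_DOA}); this identity holds only because $\mathbf{G}$ is a square invertible $2\times 2$ matrix, not merely because ``both arise from the same chain rule,'' and that invertibility is exactly the local counterpart of the paper's assumption that the full $\boldsymbol{\Upsilon}$ is invertible. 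What the paper's approach buys is that the SOEB weights $b_1,\dots,b_6$ fall out immediately as $\|\partial\mathbf{o}/\partial\theta_\mathrm{BS}\|^2$, etc., whereas in your route they emerge only after inverting the $2\times 2$ orientation Schur complement, which is considerably more algebra; what your approach buys is that it reuses the forward derivatives already tabulated in Appendix~\ref{app:derivatives} instead of requiring the fresh implicit differentiation of $\theta_0,\phi_0$ with respect to the channel angles that the paper carries out.
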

\begin{proof}
	See Appendix \ref{app:LOS_PEB_OEB}.
\end{proof}
\textcolor{black}{In light of \eqref{eq:SOEB_SPEB_3D}, it can be seen that SPEB depends on the BS angles rather than the UE angles. In other words, SPEB depends on CRLB(DOA) in the uplink, and CRLB(DOD) in the downlink, which have different expressions in \eqref{eq:closed_CRLB}. Thus, SPEB is asymmetric in these two cases. On the other hand, SOEB depends on both UE angles and BS angles, albeit with different weights in the uplink and downlink.  As a result, although the SOEB expression in \eqref{eq:SOEB_SPEB_3D} is valid for both uplink and downlink, SOEB is asymmetric in general. Finally, for the 2D special case, it can be shown that, discarding the terms related to the elevation angles, $\mathrm{SOEB}=\CRLB(\phi_\mathrm{BS})+\CRLB(\phi_\mathrm{UE})$ and $\mathrm{SPEB}=c^2\CRLB(\tau)+\|\mathbf{p}\|^2\CRLB(\phi_\mathrm{BS})$.}
\begin{table*}[!t]
	\small
	\begin{center}
		\setlength\extrarowheight{4pt}
		\caption{Scaling Factors of CRLBs of the Channel Parameters, PEB and OEB}
		\vspace{-8mm}
		\label{tab:scale}
		\begin{tabular}{|c|c|c|c|c|c||c|c|c|}
			\hline
			&CRLB$(\theta_{\mathrm{R}})$&CRLB$(\theta_\mathrm{T})$&CRLB$(\phi_{\mathrm{R}})$ &CRLB$(\phi_\mathrm{T})$&CRLB$(\tau)$&$\mathrm{SOEB}$&$\mathrm{SPEB~(DL)}$&$\mathrm{SPEB~(UL)}$\\
			\hline
			URA&$N_{\mathrm{R}}^{-2}$&$N_{\mathrm{R}}^{-1}$&$N_{\mathrm{R}}^{-2}$&$N_{\mathrm{R}}^{-1}$&$N_{\mathrm{R}}^{-1}$&$N_{\mathrm{R}}^{-1}+N_{\mathrm{R}}^{-2}$&$N_{\mathrm{R}}^{-1}$&$N_{\mathrm{R}}^{-1}+N_{\mathrm{R}}^{-2}$\\			ULA &N/A&N/A&$N_{\mathrm{R}}^{-3}$&$N_{\mathrm{R}}^{-1}$&$N_{\mathrm{R}}^{-1}$&$N_{\mathrm{R}}^{-1}+N_{\mathrm{R}}^{-3}$&$N_{\mathrm{R}}^{-1}$&$N_{\mathrm{R}}^{-1}+N_{\mathrm{R}}^{-3}$\\
			\hline
		\end{tabular}
	\end{center}
	\vspace{-12mm}
\end{table*}

These results can be used to determine scaling laws. Evaluating \eqref{eq:closed_CRLB} for URA, the scaling laws for the CRLBs of the channel parameters are listed in Table \ref{tab:scale}, while those for ULA are obtained in \cite{Zohair2016}. We see that URAs and ULAs have different scaling, in that for $\CRLB(\phi_\mathrm{R})$ scales with $1/N^2_\mathrm{R}$ for URAs, but with $1/N^3_\mathrm{R}$ for ULAs. This can be explained by noting that these scaling factor consist of two multiplicative components: SNR improvement that scales with $1/N_\mathrm{R}$ for both geometries, and a spatial resolution that depends on the squared number of antennas in the direction of $x$-axis, that is $1/N^2_\mathrm{R}$ for ULAs, and $1/(\sqrt{N_\mathrm{R}})^2$ for URAs.

\textcolor{black}{To obtain scaling laws for SPEB and SOEB, we highlight that while the expressions in \eqref{eq:SOEB_SPEB_3D} are valid for both uplink and downlink, the values of the CRLBs in the expressions are different in these two cases (see \eqref{eq:closed_CRLB}). The CRLB(DOA) and CRLB(DOD) have different scaling laws and thus SPEB will scale differently in the uplink and downlink. However, since $b_1,...,b_6$ do not depend on the number of antennas, the scaling factors of SOEB is unchanged in both cases}. 

\section{Numerical Results and Discussion}\label{sec:sim}
\subsection{Simulation Environment}
Although the theoretical results are valid for any arbitrary array geometry, we focus on URAs, as an example of 3D localization. Particularly, we consider a scenario where a BS with square array is located in the $xz$-plane centered at the \textit{origin} with $\sqrt{N_{\mathrm{BS}}}\times \sqrt{N_{\mathrm{BS}}}$ antenna elements and a height of \textcolor{black}{$h_\mathrm{BS}=10$} meters. The UE, operating at $f=38$ GHz, is equipped with a square array which have $\sqrt{N_{\mathrm{UE}}}\times \sqrt{N_{\mathrm{UE}}}$ antenna elements, and assumed to be \textcolor{black}{tilted by an orientation angle of $0^\circ$ or $10^\circ$ in both azimuth and elevation}. We investigate the performance over a flat 120$^\circ$ sector of a sectorized cell with a radius of 50 meters. The UE is assumed to be located anywhere in this sector, which lies in the plane \textcolor{black}{$z=-h_\mathrm{BS}=-10$} meters. Moreover, we consider an ideal $\sinc$ pulse so that $W^2_\mathrm{eff}={W^2/3}$, where $W=125$ MHz, $E_\mathrm{s}/T_\mathrm{s}=0$ dBm, $N_0=-170$ dBm/Hz, and $N_{\mathrm{s}}=16$ pilot symbols. \textcolor{black}{ The LOS SNR at any location in the sector is given by $\mathrm{SNR}[dB]=144.24+20\log_{10} |\beta|+20\log_{10}\|\mathbf{a}_T\mathbf{F}\|$, with 95\% of the locations having an SNR of at least 30 dB}. We utilize the directional beamforming scheme defined in (\ref{eq:BF_Definition}). In the downlink case, the directions of the beams are \textcolor{black}{fixed and} chosen such that the beams centers are equispaced on the ground. On the other hand, in the uplink, \textcolor{black}{the centers of beams are fixed and equispaced on a virtual sector containing the BS. Initially, when the UE has zero orientation, i.e, lying in the $xz$-plane and facing the BS (See Fig.~\ref{fig:scenario}), this virtual sector lies in the horizontal plane $z=0$ meters. The beamforming angles are measured with respect to the UE array plane. Thus, when the orientation of the UE is non-zero, the virtual plane is rotated by the same orientation angles.} Fig.~\ref{fig:sectored_cell} depicts an example with $N_{\mathrm{B}}=25$ for both downlink and uplink.
\begin{figure}[!t]
	\centering
	\includegraphics[scale=0.32]{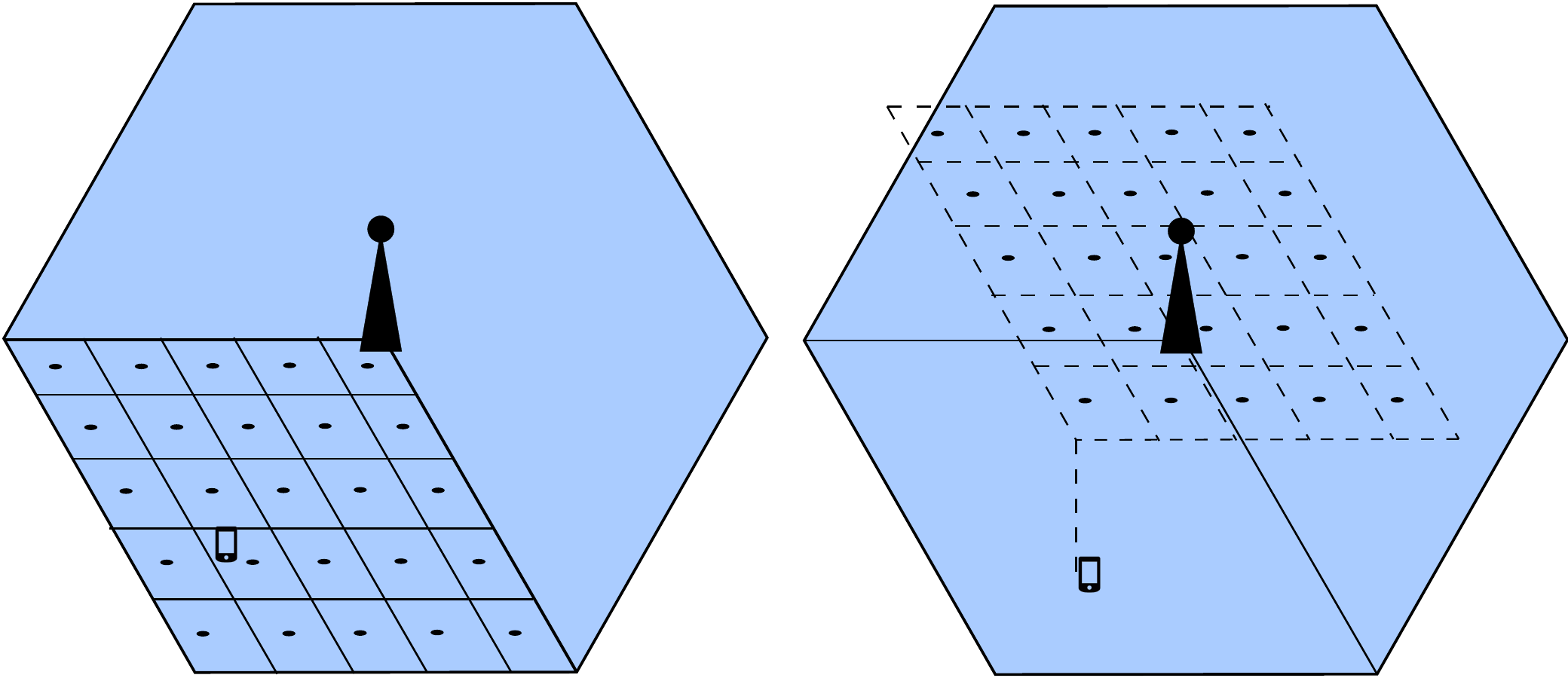}
			\vspace{-4mm}
	\caption{A cell sectorized into three sectors, each served by 25 beams directed towards a grid on the ground in the downlink (left) and towards a virtual grid in uplink (right). The grid has the same orientation as the UE. }\label{fig:sectored_cell}
		\vspace{-8mm}
\end{figure}

The environment comprises scatterers and reflectors, with scatterers distributed arbitrarily in the 3D space, and reflectors  placed close to the sector edge, \textcolor{black}{as shown in the example scenario in Fig.~\ref{fig:drawing}. We use 5 reflectors placed at the edge of the sector, not to obscure the area behind them if placed otherwise. We also use 15 scatterers distributed arbitrarily in the volume formed by the sector as base, and the BS as height. We only consider the clusters that contribute by a power greater than 10\% of the LOS power. It is seen during the simulations that this configuration leads to a maximum number of paths $M=6$ at any location in the studied sector, which is similar to the probability mass function in \cite{Akdeniz2014}}. Accordingly, defining $\vartheta_1=2\pi D_1/\lambda$ and $\vartheta_m = 2 \pi (d_{1,m}+d_{2,m})/\lambda$ for $m>1$, the complex channel gain of the $m^{\mathrm{th}}$ path is modeled by $\beta_m=|\beta_m|e^{j\vartheta _m}$ such that
\begin{align}\label{eq:beta_m}
|\beta_m|^2 &= \frac{\lambda^2}{(4\pi)^2}\begin{cases}
{1}/{D_1^2}& \text{LOS}\\
{\Gamma_{\mathrm{R}}}/{(d_{1,m}+d_{2,m})^2}& \text{reflector}\\
{\sigma^2_{\mathrm{RCS}}}/({4\pi}{( d_{1,m}d_{2,m})^2})& \text{scatterer},\\
\end{cases}
\end{align}
where $\sigma_{\mathrm{RCS}}^2=50$ m$^2$, and $\Gamma_{\mathrm{R}}=0.7$ are the radar cross section, and the reflection coefficient, respectively. \textcolor{black}{Although our derivations are valid for any path loss model, we use the model in \eqref{eq:beta_m} and the corresponding parameters values to get comparative insights into the role of reflectors and scatterers on the performance bounds. This may not be the typical case in reality where scatterers are characterized by the roughness of the surfaces, which would lead to random path loss, and consequently, random PEB and OEB.} The locations of reflectors are computed using the virtual transmitter method \cite{gentner2016multipath}.
We evaluate the PEB and OEB for the following scenarios:
\begin{figure}[!t]
	\centering
			\vspace{-2mm}
	\includegraphics[width=8cm,height=4.8cm]{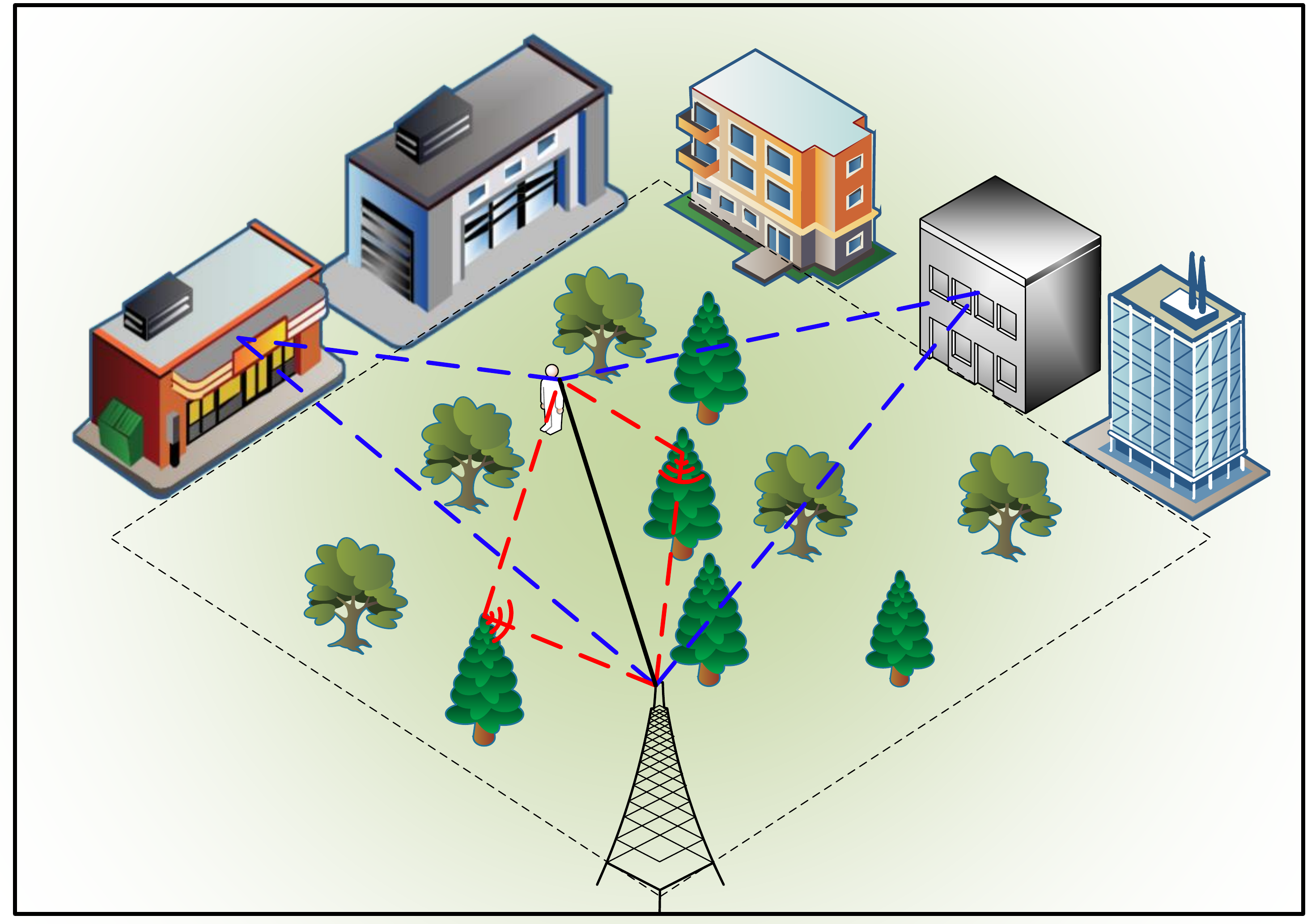}
		\vspace{-2mm}
	\caption{A scenario with LOS (black), 2 reflectors (blue) and 2 scatterers (red).}\label{fig:drawing}
	\vspace{-2mm}
\end{figure}
\begin{enumerate}
	\item \emph{LOS}: Free space propagation only, without NLOS paths.
	\item \emph{LOS+R}: A LOS path and $M-1$ reflected NLOS paths.
	\item \emph{LOS+S}: A LOS path and $M-1$ scattered NLOS paths.
	\item \emph{LOS+C}: A LOS path and a mix of both scattered and reflected NLOS paths.
	\item \emph{NLOS}: The LOS path is blocked, so only scattered and reflected NLOS paths exist.
\end{enumerate}
All the following results are obtained with $N_{\mathrm{B}}=25$, $N_\mathrm{T}=N_{\mathrm{R}}=144$, unless otherwise stated. \textcolor{black}{We choose equal array sizes at the UE and BS to make the comparison of uplink and downlink localization fair by having a symmetric channel setup. However, it is understood that more complexity as allowed at the BS and its array can grow to larger sizes such as that in \cite{Fan2017} to have up to 10,000 antennas, which will improve the localization bounds presented herein, subject to the number of beams used.}
\begin{figure}[!t]
	\centering
	\includegraphics{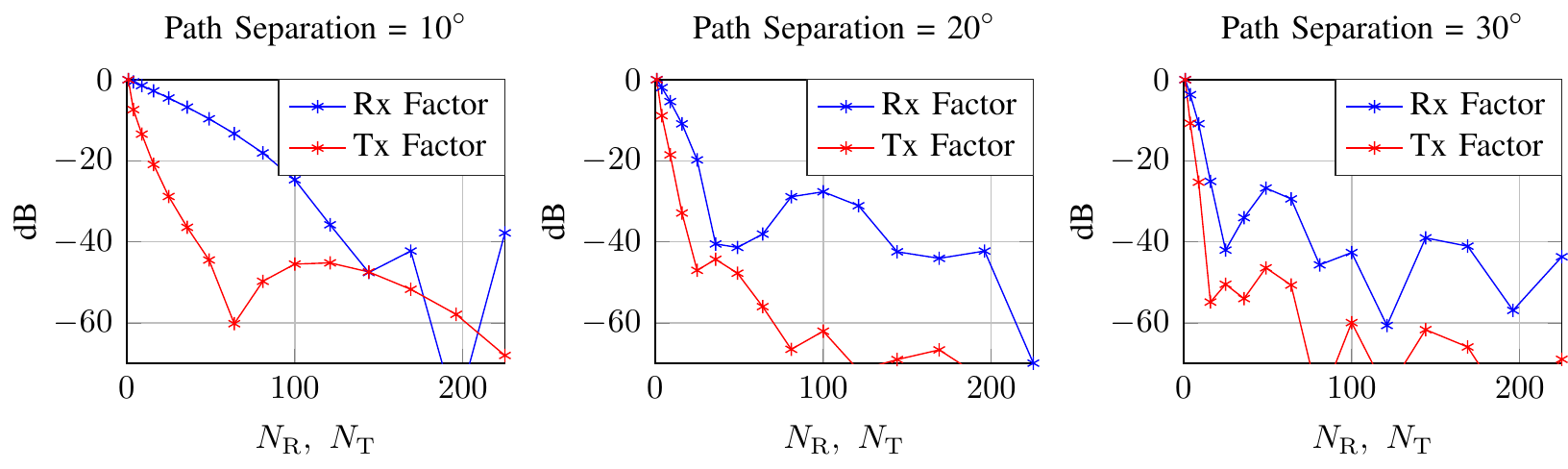}
	\vspace{-10mm}
	\caption{Receiver and Transmitter factors w.r.t. $N_\mathrm{R}$ and $N_\mathrm{T}$ for URA with different path separation angles. The separation angle is the angle difference between the azimuth and elevation angles of the two paths.}\label{fig:NtNrApprox}
	\vspace{-3mm}
\end{figure}

\textcolor{black}{\subsection{Tx and Rx Factors of the Approximate FIM}
We now investigate numerically the trend of the RX and TX factors in \eqref{eq:refcase} with respect to $N_\mathrm{R}$ and $N_\mathrm{T}$ and the path separation angles, as shown in Fig.~\ref{fig:NtNrApprox}. Each subfigure is obtained for a different separation angle, that is the azimuth and elevation angle difference between two paths. It can be seen that with a separation of $10^\circ$, $N_\mathrm{R}=100$ and $N_\mathrm{T}=16,$ the corresponding factor drops below -20 dB (1\% of the maximum). On the other hand, for higher separation angles, the two factors drop below -20 dB with less number of antennas. Finally, note that the approximate FIM is obtained by a combined effect of these two factors, plus the signal factor. Therefore, if $N_\mathrm{R}\geq{100}$ or $N_\mathrm{T}\geq{16}$, the total FIM will be almost diagonal.}

\begin{figure}[!t]
	\centering
	\includegraphics[trim={10mm 4mm 0cm 0mm},clip=true,scale=0.67]{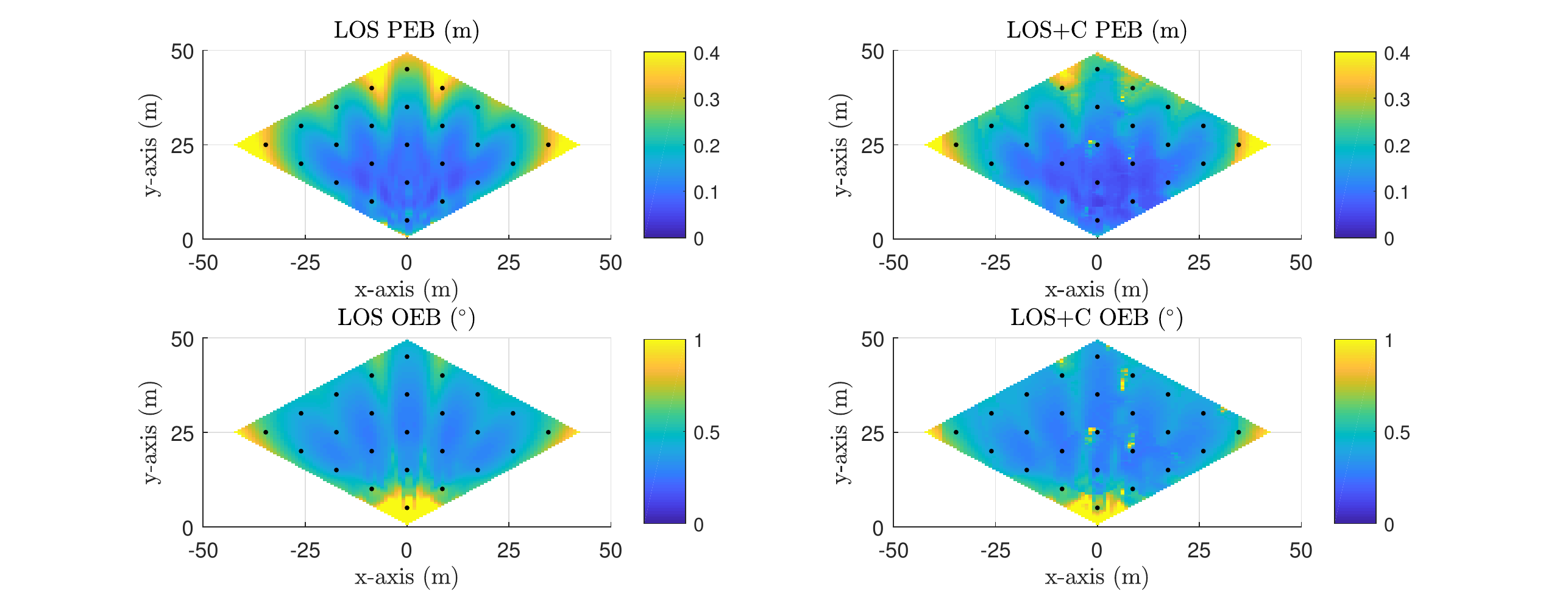}
	\caption{\textcolor{black}{PEB and OEB of downlink LOS (left) and LOS$+$C (right). $N_\mathrm{R}$=$N_\mathrm{T}$=144. The dots denote the centers of beams, $N_{\mathrm{B}}$=25.}}\label{fig:spatial1}
	\vspace{-10mm}
\end{figure}
\subsection{Downlink PEB and OEB}
Fig.~\ref{fig:spatial1} (left) shows the downlink PEB and OEB as a function of the UE location for the LOS case with the BS located at $(0,0)$. With $N_\mathrm{T}=N_{\mathrm{R}}=144$, and $N_{\mathrm{B}}=25$, the maximum PEB in the sector is $40$ cm, while the maximum OEB is $1^\circ$ in the LOS scenario. Note that from (\ref{eq:SOEB_SPEB_3D}), the PEB increases with $\|\mathbf{p}\|$. This explains the yellow area around the corners. Moreover, the closer the UE to the BS, i.e., as $\theta_\mathrm{BS}\rightarrow\pi$, singularities appear in the FIM, and the OEB tends to worsen, hence the yellow areas around the BS. Scatterers and reflectors are introduced in the 3D space, so that a maximum of 5 clusters contribute at any given location. Based on that, Fig.~\ref{fig:spatial1} (right) shows the PEB and OEB for the LOS+C case. Although incorporating NLOS clusters in the localization does not lower the maximum bound value, it improves the bounds at those locations where the clusters' signal are received. In the illustrated example, the clusters mainly affect the top and center areas of the sector. Finally, note the yellow dots in the central area of the PEB and OEB (LOS+C). These dots occur because at these locations, the scatterer blocks the LOS, violating the unique parameters assumption, and causing singularities in the FIM.

\begin{figure}[!t]
	\centering
		\vspace{-2mm}
	\footnotesize
	\includegraphics{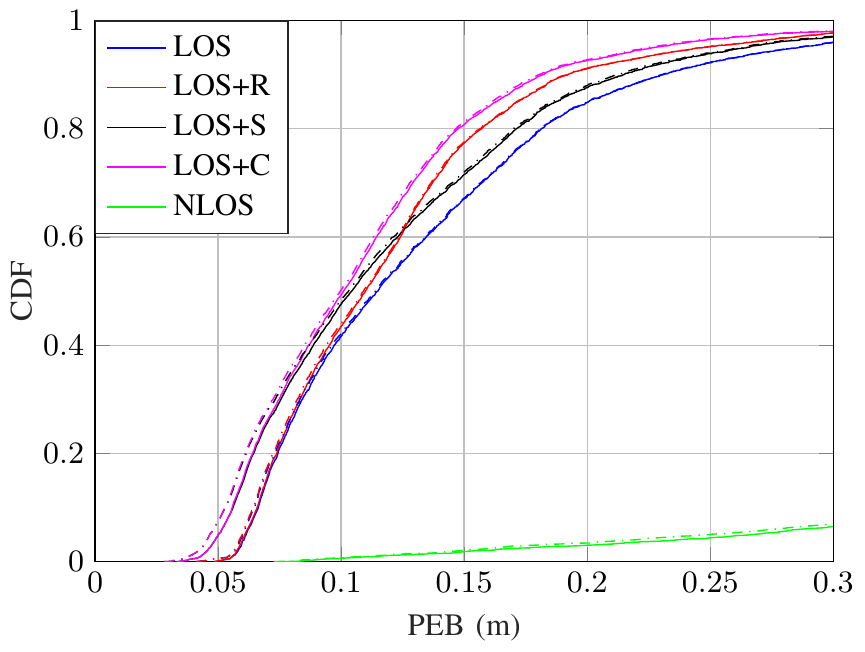}
		\vspace{-5mm}
	\caption{The CDF of downlink PEB for different scenarios using the exact (solid) and approximate (dashed) FIM approaches. }\label{fig:PEB_DL_scenarios}
			\vspace{-9mm}
\end{figure}

To obtain a more concise and quantitative assessment of the performance, we collect all the PEB and OEB values across the space and visualize them in a cumulative distribution function (CDF).  Subsequently, Fig.~\ref{fig:PEB_DL_scenarios} shows PEB obtained for all 5 considered scenarios. The PEB obtained from the approximate approach, is also shown in the figure. We observe the following: overall, scatterers and/or reflectors improve the localization performance, compared to the LOS-only scenario, despite the fact that more parameters need to be estimated. \textcolor{black}{Scatterers are mainly useful in providing rather low PEB improvement for many locations, while reflectors can provide modest PEB improvement for fewer locations.} When scatterers and reflectors are combined, we see both phenomena. It is also apparent that the approximate approaches closely follow the exact PEB and OEB and that the approximation always leads to a slightly lower PEB and OEB, due to the independent paths assumption, under this approach. Note that at a $90\%$ CDF, the PEB values for LOS, LOS+R, LOS+S, LOS+C are $0.23$ m, $0.21$ m, $0.19$ m and $0.18$ m, respectively. Moreover, note that the NLOS scenario is unreliable, with a PEB of $0.5$ m at a $13\%$ CDF, \textcolor{black}{and reaches 90\% CDF at a value that is irrelevant in mmWave localization. OEB curves (not shown) look similar to those in Fig.~\ref{fig:PEB_DL_scenarios}, with 90\% CDF ranging between $0.42^\circ$ and $0.5^\circ$ when a LOS exists. Finally, we obtained similar qualitative results with $N_\mathrm{R}=25$, but with PEB and OEB of $45-55$ cm and   $1.77^\circ-1.84^\circ$, at 90\% CDF respectively, when a LOS exists.}


\subsection{The Selection of $N_{\mathrm{B}}$}
\begin{figure}[!t]
	\centering
	\footnotesize
	\includegraphics{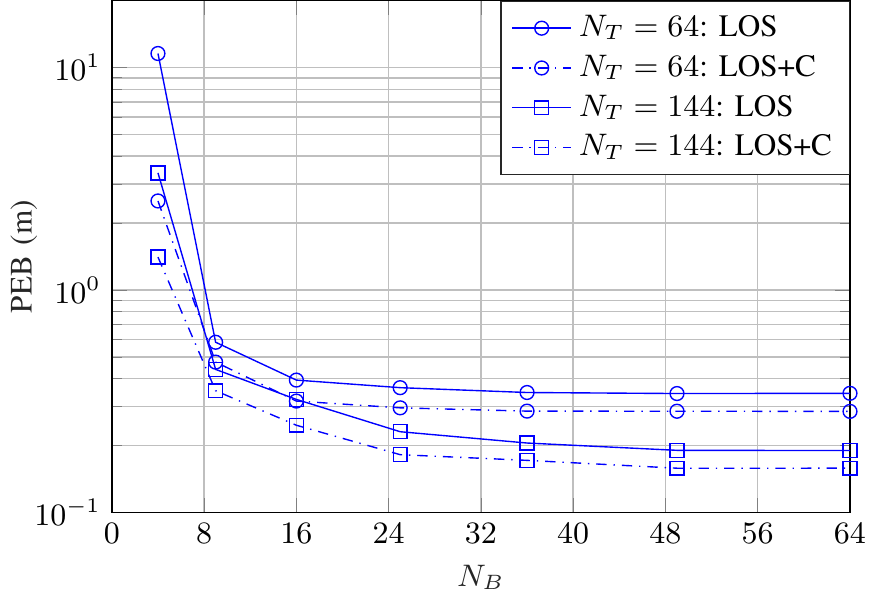}
		\vspace{-6mm}
	\caption{\textcolor{black}{Effect of $N_{\mathrm{B}}$ on the exact downlink PEB with $N_\mathrm{R}=144, N_\mathrm{T}\in\{64, 144\}$, for LOS and LOS+C, at $\mathrm{CDF}=0.9$.}}
	\label{fig:PEB_NB}
	\vspace{-9mm}
\end{figure}
In this section, we will evaluate the impact of the number of beams on downlink localization. Considering directional beamforming and a given number of transmit antennas, i.e., a fixed beamwidth, the selection of $N_{\mathrm{B}}$ becomes a trade-off between hardware complexity and the coverage area up to a certain value of $N_{\mathrm{B}}$, where more beams do not necessarily assist the localization. This relationship is highlighted in Fig.~\ref{fig:PEB_NB} for PEB values across the space, at a CDF of $90\%$ (similar results hold for the OEB, not shown). It can be seen that at a small $N_{\mathrm{B}}$, the bounds are high, but as $N_{\mathrm{B}}$ increases, the bounds start to decrease due to better coverage. However, as $N_{\mathrm{B}}$ continues to increase, the bounds reach a floor and adding more beams only adds more complexity while providing negligible improvement. \textcolor{black}{To see why, recall that the total transmitted power over the sector is fixed. So, starting with a small $N_\mathrm{B}$ and increasing it gradually improves the coverage, while reducing the power per beam. Eventually, beams start to overlap, but this does not improve the performance because the power impinging on a certain area remains approximately constant.} \textcolor{black}{This means that $N_\mathrm{B}$ should be selected as a function of the beamwidth, which is in turn a function of $N_\mathrm{T}$. For instance, considering $N_\mathrm{T} \in \{64, 144\}$,  beams with $N_\mathrm{T}=64$ are wider\footnote{\textcolor{black}{From \cite{VanTrees2002}, a URA is considered as two ULAs in orthogonal directions. The half-power beamwidth in each direction is given by $\mathrm{HPBW}=2\sin^{-1}\left(\frac{0.891}{N_\mathrm{T}}\right)$. Thus, high $N_\mathrm{T}$ leads to small $\mathrm{HPBW}$.}} than $N_\mathrm{T}=144$. Therefore, smaller $N_\mathrm{B}$ is required to provide full area coverage when  $N_\mathrm{T} =64$. More specifically, it is sufficient to have $16$ beams when $N_\mathrm{T}=64$, compared to $25$  beams in the case of $N_\mathrm{T}=144$. However, it should be noted that while a higher $N_\mathrm{T}$ provides narrower beams and necessitates more beams for coverage, it provides higher array gain, i.e., higher SNR due to scaling with $\gamma$  in \eqref{eq:bigFIMS}, hence, lower PEB and OEB. This conclusion manifests in Fig. \ref{fig:PEB_NB}, in that the use of 144 antennas attains a lower floor than 64 antennas.}

\subsection{Downlink vs. Uplink Comparison}
We now compare uplink and downlink in terms of: (i) UE orientation; (ii) number of transmit antennas; (iii) number of receive antennas. We recall that for the downlink transmitter (BS) has a known position and orientation, while for the uplink transmitter (UE) they are unknown.

\subsubsection{UE Orientation impact on PEB and OEB}
\begin{figure}[!t]
	\centering
	\footnotesize
	\vspace{-3mm}
	\includegraphics{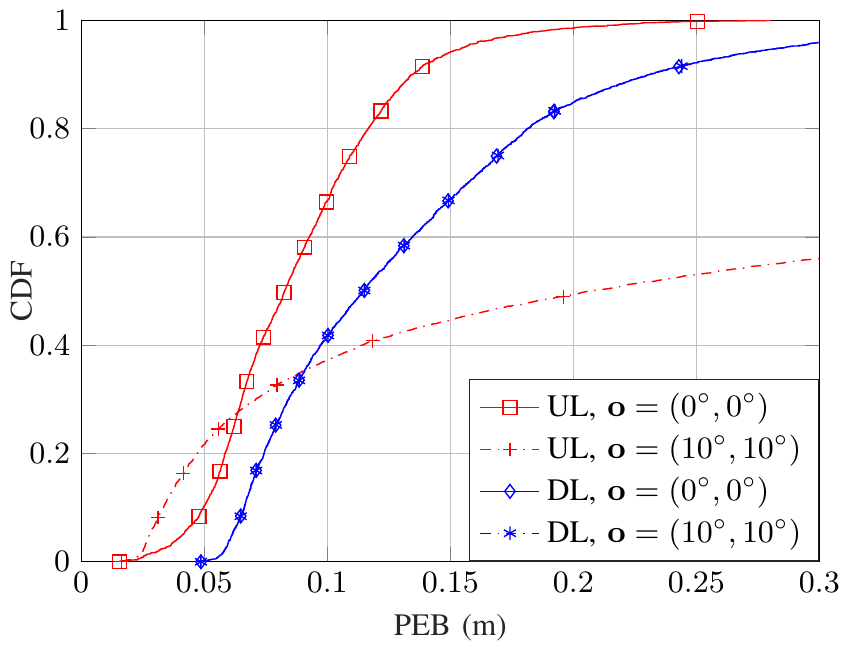}
	\vspace{-4mm}
	\caption{CDF of the PEB over the entire sector, for uplink and downlink, with different orientation angles. }\label{fig:PEB_baseline}
			\vspace{-2mm}
\end{figure}
\begin{figure}[!t]
	\centering
	\footnotesize
	\includegraphics{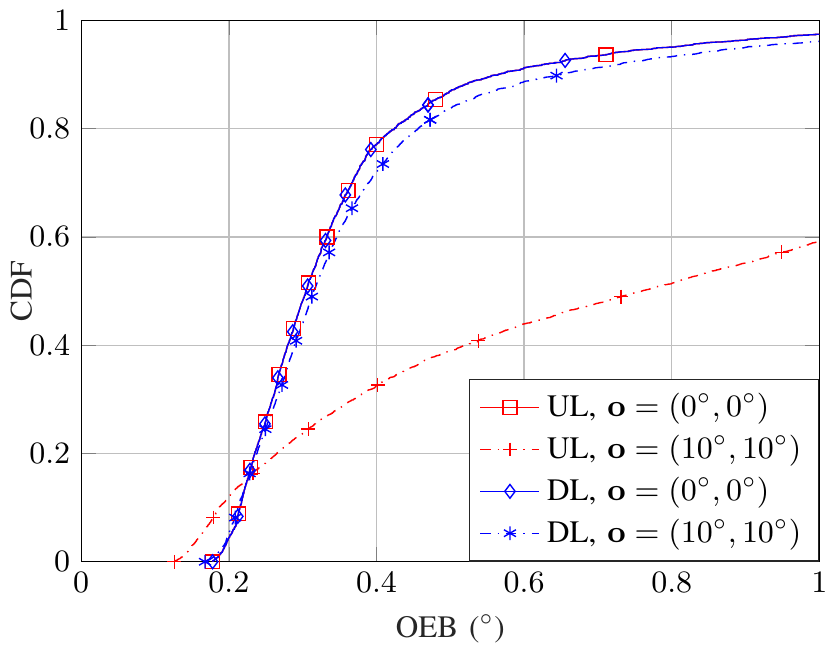}
	\vspace{-2mm}
	\caption{CDF of the OEB over the entire sector, for uplink and downlink, with different orientation angles.}
	\label{fig:OEB_baseline}
	\vspace{-7mm}
\end{figure}

Considering Fig.~\ref{fig:PEB_baseline}, the CDF of PEB is shown for uplink and downlink with two different UE orientation angles. \textcolor{black}{Recall that in the downlink, the UE is a receiver, where no beamforming is assumed. In that sense, and for the sake of computing the bounds regardless of the processing at the receiver, we assume that the receiver is equipped with isotropic antenna elements. Therefore, }the \emph{downlink PEB} is independent of the UE orientation, the downlink PEB is identical in both $0^\circ$ and $10^\circ$ orientation cases. On the contrary, the \emph{uplink PEB} is highly dependent on the UE orientation: beamforming from the UE is performed in fixed directions in the UE reference frame. Depending on the UE location, beams may miss the BS. With $10^\circ$ orientation, this happens more frequently, thus degrading the PEB. Finally, although in Fig.~\ref{fig:PEB_baseline} the uplink with $0^\circ$ orientation outperforms the downlink, this is not generally the case, since this depends on the choice of $N_{\mathrm{R}}$, as demonstrated below.

For the OEB in Fig.~\ref{fig:OEB_baseline}, downlink curves again coincide, with the uplink OEB for $0^\circ$ yielding similar performance. This is due to OEB being a function of DOA and DOD, which are interchangeable when UE and BS have the same orientation. However, when the UE orientation is $10^\circ$, OEB is again degraded, similar to the PEB. \textcolor{black}{Note that to improve the presentation, Figs.~\ref{fig:PEB_baseline} and \ref{fig:OEB_baseline} are truncated to show the relevant values of PEB and OEB, respectively.}


\subsubsection{Effect of $N_{\mathrm{R}}$ and $N_{\mathrm{T}}$}

Fig.~\ref{fig:PEB_NR}  shows the scaling effect of the PEB at $90\%$ CDF for LOS, which in line with Table \ref{tab:scale}, implies that uplink and downlink have different scaling exponents. This leads the two lines to cross at some value. So, regarding PEB, choosing $N_{\mathrm{R}}$ on either side of this crossing point dictates the outperforming scheme, uplink or downlink. Specifically, for very large number of receive antennas, uplink PEB becomes far better than downlink PEB. \textcolor{black}{With reference to Table \ref{tab:scale} and (\ref{eq:SOEB_SPEB_3D}), downlink $\text{PEB}\propto\frac{1}{\sqrt{{N_\mathrm{R}}}}$, while uplink $\text{PEB}\propto\sqrt{\frac{c_1}{N_\mathrm{R}}+\frac{c_2}{N^2_\mathrm{R}}}$, for some constants $c_1$, and $c_2$ that depend on location, bandwidth, and path gain. For the uplink case, the first term corresponds to CRLB(TOA), while the second term corresponds to CRLB(DOA), from Fig.~\ref{fig:PEB_NR}, it can be inferred that CRLB(TOA) is much smaller than CRLB(DOA) yielding uplink $\text{PEB}\propto\frac{1}{N_\mathrm{R}}$, which decays faster than $\text{PEB}\propto\frac{1}{\sqrt{{N_\mathrm{R}}}}$. This also means that the estimation of the UE location is limited by the estimation of the angles rather than the range. }

From Table \ref{tab:scale}, however, the OEB scaling is different than PEB. This is confirmed by the results of the OEB at $90\%$ CDF for LOS shown in Fig.~\ref{fig:OEB_NR}. It can be seen that for relatively large $N_{\mathrm{R}}$, OEB scales of $1/\sqrt{N_{\mathrm{R}}}$, while for small $N_{\mathrm{R}}$, it scales of $1/N_{\mathrm{R}}$, in both uplink and downlink.

\begin{figure}[!t]
	\centering
	\footnotesize
		\vspace{-4mm}
	\includegraphics{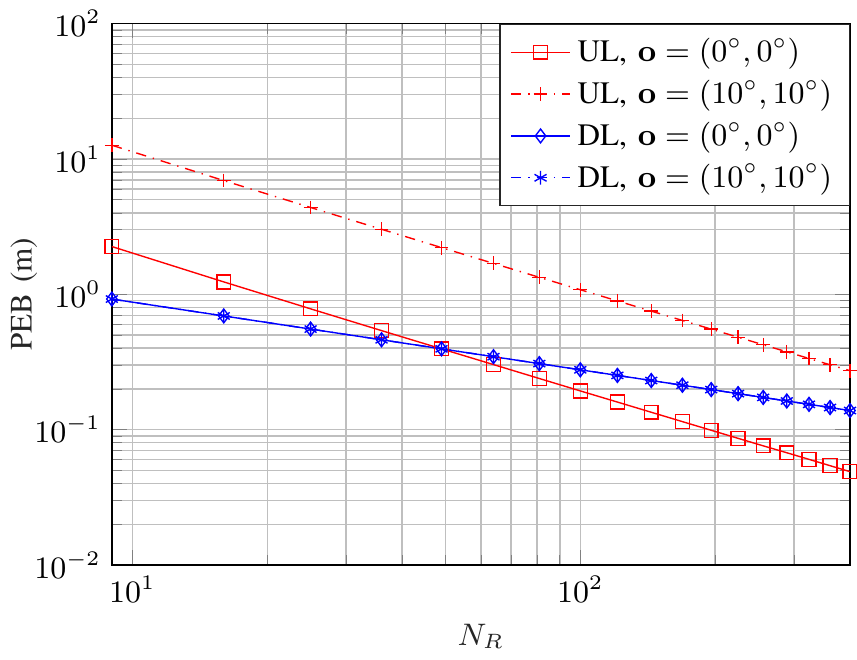}
		\vspace{-5mm}
	\caption{Scaling of the PEB w.r.t $N_{\mathrm{R}}$ for uplink and downlink LOS scenarios, at $\mathrm{CDF}=0.9$, with different orientation angles.}\label{fig:PEB_NR}
	\vspace{-4mm}
\end{figure}
\begin{figure}[!t]
	\centering
	\footnotesize
	\includegraphics{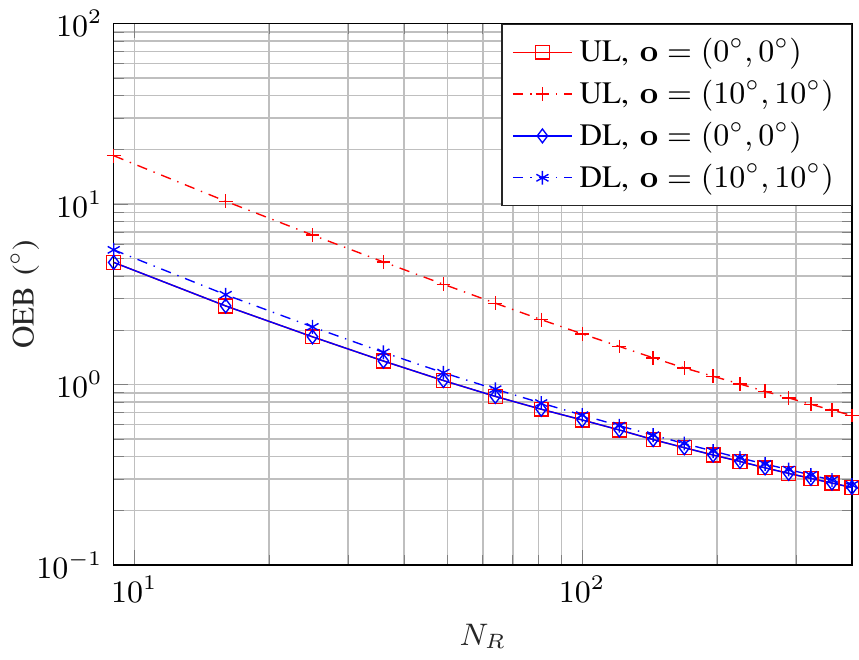}
	\caption{Scaling of the OEB w.r.t $N_{\mathrm{R}}$ for uplink and downlink LOS scenarios, at $\mathrm{CDF}=0.9$, with different orientation angles.}\label{fig:OEB_NR}
	\vspace{-2mm}
\end{figure}
\begin{figure}[!t]
	\centering
	\footnotesize
	\includegraphics{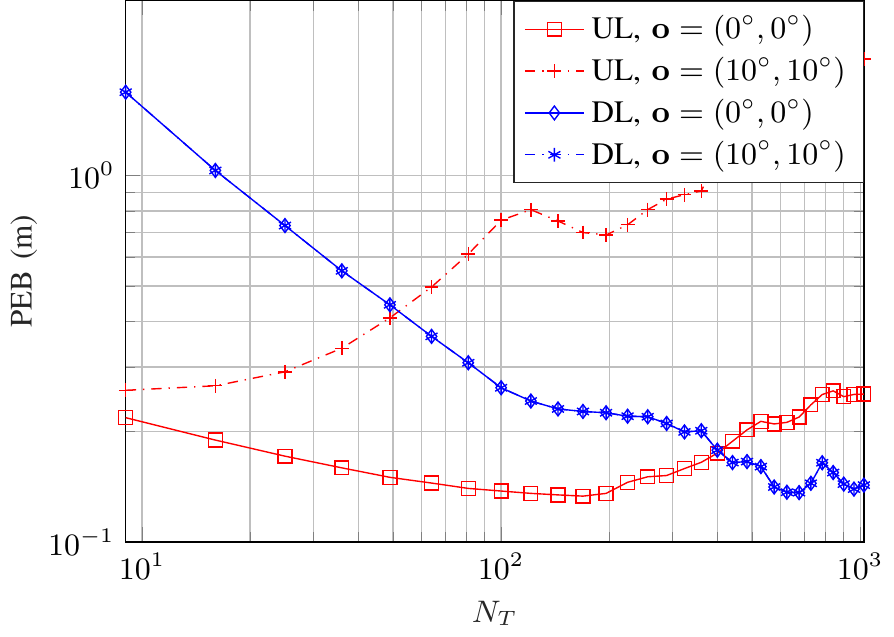}
	\vspace{-5mm}
	\caption{Scaling of the PEB w.r.t $N_\mathrm{T}$ for uplink and downlink LOS scenarios, at $\mathrm{CDF}=0.9$, with different orientation angles.}\label{fig:PEB_NT}
	\vspace{-8mm}
\end{figure}

Finally, we discuss the effect of $N_\mathrm{T}$ on the PEB shown in Fig.~\ref{fig:PEB_NT} (similar OEB results are observed, not shown). Both PEB and OEB scale non-linearly with $N_\mathrm{T}$. Small $N_\mathrm{T}$ results in bad performance due to less spatial resolution and lower SNR. As $N_\mathrm{T}$ increases, the SNR increases but the beamwidth decreases. At a certain point, the beams become too narrow and bounds start to worsen. Both uplink and downlink suffer from this effect, but it is more severe in the uplink.
\subsection{\textcolor{black}{Summary of Results}}\label{sec:summary}
\textcolor{black}{Focusing on outdoor scenarios, our simulations of the approximate approach under mmWave assumptions showed that when $N_\mathrm{R}=100$ and $N_\mathrm{T}=16$, the paths are resolvable in the space domains, and the multipath components can be considered orthogonal. Moreover, our investigations implied that NLOS clusters improve the localization when a LOS path exists. Particularly, we observed that reflectors provide modest PEB improvement for some locations, while scatterers provide small PEB decrease for more locations. Our analysis of the impact of the $N_\mathrm{B}$ and $N_\mathrm{T}$ showed that although smaller $N_\mathrm{T}$ provide better coverage due to the wider beams, larger $N_\mathrm{T}$ provides higher SNR, leading to lower PEB and OEB. We also observed that PEB and OEB are more sensitive to the orientation angle in the uplink than in the downlink. Finally, we showed that under mmWave assumptions, $\mathrm{PEB}<1$ meters and $\mathrm{OEB }<1^\circ$ are feasible for BS-UE separation of up to 50 m.}
\section{Conclusions}\label{sec:conc}
In this paper, we considered mmWave localization performance limits in terms of PEB and OEB, for uplink and downlink localization with arbitrary array geometries in multipath environments. We obtained these bounds by transforming the FIM of the channel parameters that was shown to be composed of three factors related to the receiver side, the transmitter side, and the transmitted signals. Our investigations of an approximate approach under mmWave assumptions showed that if the number of receive antennas is very high, or if the bandwidth is very large, the paths are resolvable in either the time or space domains, and the multipath components can be considered orthogonal. Consequently, the total FIM is the sum of the FIM of individual paths. We also derived closed-form expressions for single-path PEB and OEB, and showed that OEB is a function of the CRLB of the DOA and DOD, while PEB is a function of the CRLB of the TOA and the CRLB of the BS angles (DOD in the downlink, and DOA in the uplink). We showed that the uplink and downlink are not identical. Finally, although having many receive antennas is more beneficial in uplink than in downlink localization, the former is generally harder since transmit beamforming at UE may point towards directions not useful for localization.

\appendices
\section{Exact Channel Parameters FIM Submatrices}\label{app:FIM_entries}
\textcolor{black}{Taking the derivative of \eqref{eq:multipath_mu} with respect to the unknown parameters, we can write
	\small
\begin{subequations}
	\begin{align}
\frac{\partial\boldsymbol{\mu}_{\boldsymbol\varphi}}{\partial{\theta_{\mathrm{R},m}}}&=-j\sqrt{N_\mathrm{R}N_\mathrm{T}E_\mathrm{s}}\beta_m\tilde{\mathbf{K}}_{\mathrm{R},m}\mathbf{a}_{\mathrm{R},m}\mathbf{a}^\mathrm{H}_{\mathrm{T},m}\mathbf{Fs}(t-\tau_m),\\
\frac{\partial\boldsymbol{\mu}_{\boldsymbol\varphi}}{\partial{\theta_{\mathrm{T},m}}}&=j\sqrt{N_\mathrm{R}N_\mathrm{T}E_\mathrm{s}}\beta_m\mathbf{a}_{\mathrm{R},m}\mathbf{a}^\mathrm{H}_{\mathrm{T},m}\tilde{\mathbf{K}}_{\mathrm{T},m}\mathbf{Fs}(t-\tau_m),\\
\frac{\partial\boldsymbol{\mu}_{\boldsymbol\varphi}}{\partial{\phi_{\mathrm{R},m}}}&=-j\sqrt{N_\mathrm{R}N_\mathrm{T}E_\mathrm{s}}\beta_m\tilde{\mathbf{P}}_{\mathrm{R},m}\mathbf{a}_{\mathrm{R},m}\mathbf{a}^\mathrm{H}_{\mathrm{T},m}\mathbf{Fs}(t-\tau_m),\\
\frac{\partial\boldsymbol{\mu}_{\boldsymbol\varphi}}{\partial{\phi_{\mathrm{T},m}}}&=j\sqrt{N_\mathrm{R}N_\mathrm{T}E_\mathrm{s}}\beta_m\mathbf{a}_{\mathrm{R},m}\mathbf{a}^\mathrm{H}_{\mathrm{T},m}\tilde{\mathbf{P}}_{\mathrm{T},m}\mathbf{Fs}(t-\tau_m),\\
\frac{\partial\boldsymbol{\mu}_{\boldsymbol\varphi}}{\partial{\beta_{\mathrm{R},m}}}&=-j\frac{\partial\boldsymbol{\mu}_{\boldsymbol\varphi}}{\partial{\beta_{\mathrm{I},m}}}=\sqrt{N_\mathrm{R}N_\mathrm{T}E_\mathrm{s}}\mathbf{a}_{\mathrm{R},m}\mathbf{a}^\mathrm{H}_{\mathrm{T},m}\mathbf{Fs}(t-\tau_m),\\
\frac{\partial\boldsymbol{\mu}_{\boldsymbol\varphi}}{\partial{\tau_{m}}}&=\sqrt{N_\mathrm{R}N_\mathrm{T}E_\mathrm{s}}\beta_m\mathbf{a}_{\mathrm{R},m}\mathbf{a}^\mathrm{H}_{\mathrm{T},m}\mathbf{F}\frac{\partial\mathbf{s}(t-\tau_m)}{\partial{\tau_{m}}},
\end{align}
\end{subequations}
\normalsize
where $\tilde{\mathbf{K}}_{\mathrm{R},m}$ and $\tilde{\mathbf{P}}_{\mathrm{R},m}$ are as defined in \eqref{eq:RXmatrices}. $\tilde{\mathbf{K}}_{\mathrm{T},m}$ and $\tilde{\mathbf{P}}_{\mathrm{T},m}$ are defined similarly by replacing the subscript $\mathrm{R}$ by $\mathrm{T}$. Defining $\gamma\triangleq{N_\mathrm{R}N_\mathrm{T}N_\mathrm{s}E_\mathrm{s}/N_0}$, then for $1\leq{u,v}\leq{M}$, from \eqref{eq:FIM} we write
	\small
\begin{align}
[\mathbf{J}_{{\theta}_\mathrm{R}{\theta}_\mathrm{R}}]_{u,v}&=\frac{1}{N_0}\int_0^{T_o}\Re\left\lbrace\frac{\partial\boldsymbol{\mu}^\mathrm{H}_{\boldsymbol{\varphi}}}{\partial{\theta_{R,u}}} \frac{\partial\boldsymbol{\mu}_{\boldsymbol\varphi}}{\partial{\theta_{R,v}}} \right\rbrace dt 
=\gamma\Re\left\lbrace \beta_u^{*}\beta_v[\mathbf{R}_0]_{u,v}\mathbf{a}^\mathrm{H}_{R,u}\tilde{\mathbf{K}}_{R,u}\tilde{\mathbf{K}}_{R,v}\mathbf{a}_{R,v}\mathbf{a}^\mathrm{H}_{T,v}\mathbf{F}\mathbf{F}^\mathrm{H}\mathbf{a}_{T,u}\right\rbrace \label{eq:14}
\end{align}
\normalsize
where $\int_{0}^{T_o}\mathbf{s}(t-\tau_v)\mathbf{s}^\mathrm{H}(t-\tau_u)=N_\mathrm{s}[\mathbf{R}_0]_{u,v}\mathbf{I}_{N_B}$, and 
\begin{align}
[\mathbf{R}_0]_{u,v}\triangleq\int_0^{T_o}{s}_\ell(t-\tau_v){s}_\ell^*(t-\tau_u)\mathrm{d}t&=\int_{-W/2}^{W/2}|P(f)|^2e^{-j2\pi f\Delta\tau_{uv}}\mathrm{d}f\label{eq:R_0},
\end{align}
where $\Delta\tau_{uv}=\tau_v-\tau_u$. Note that in (\ref{eq:14}), the fact that $\mathbf{a}^\mathrm{H}\mathbf{b}\mathbf{c}^\mathrm{H}\mathbf{d}=\mathbf{c}^\mathrm{H}\mathbf{d}\mathbf{a}^\mathrm{H}\mathbf{b}$ is used, and that (\ref{eq:R_0}) follows from Parseval's theorem. Rearranging (\ref{eq:14}) to
	\begin{align}
	[\mathbf{J}_{{\theta}_\mathrm{R}{\theta}_\mathrm{R}}]_{u,v}=\gamma\Re\left\lbrace (\beta_u^{*}\mathbf{a}^\mathrm{H}_{R,u}\tilde{\mathbf{K}}_{R,u}\tilde{\mathbf{K}}_{R,v}\mathbf{a}_{R,v}\beta_v)(\mathbf{a}^\mathrm{H}_{T,v}\mathbf{F}\mathbf{F}^\mathrm{H}\mathbf{a}_{T,u})[\mathbf{R}_0]_{u,v}\right\rbrace,
	\end{align}
then, denoting the Hadamard product by $\odot$ and using the notation in \eqref{eq:RXmatrices}, we rewrite $\mathbf{J}_{{\theta}_\mathrm{R}{\theta}_\mathrm{R}}$
\begin{align}
\mathbf{J}_{{\theta}_\mathrm{R}{\theta}_\mathrm{R}}=\Re \left\lbrace(\mathbf{B}^\mathrm{H}{\mathbf{K}}_\mathrm{R}^\mathrm{H}{\mathbf{K}}_\mathrm{R}\mathbf{B})\odot(\mathbf{A}_\mathrm{T}^\mathrm{H}\mathbf{F}\mathbf{F}^\mathrm{H}\mathbf{A}_\mathrm{T})^\mathrm{T}\odot\mathbf{R}_0\right\rbrace.
\end{align}}
The other sub-matrices of (\ref{eq:matrix_I}) can be similarly obtained as
\small
\begin{subequations}\label{eq:bigFIMS}
	\begin{align}
	&\mathbf{J}_{\boldsymbol\theta_{\mathrm{T}}\boldsymbol\theta_{\mathrm{T}}}=\gamma  \Re \{(\mathbf{B}^\mathrm{H}\mathbf{A}_{\mathrm{R}}^{\mathrm{H}}\mathbf{A}_{\mathrm{R}}\mathbf{B})\odot({\mathbf{K}}_\mathrm{T}^{\mathrm{H}}\mathbf{FF}^\mathrm{H}{\mathbf{K}}_\mathrm{T})^\mathrm{T}\odot\mathbf{R}_0\}\label{eq:thTthT},\\
	&\mathbf{J}_{\boldsymbol\phi_{\mathrm{R}}\boldsymbol\phi_{\mathrm{R}}}=\gamma  \Re \{(\mathbf{B}^\mathrm{H}{\mathbf{P}}_{\mathrm{R}}^{\mathrm{H}}{\mathbf{P}}_{\mathrm{R}}\mathbf{B})\odot(\mathbf{A}_\mathrm{T}^{\mathrm{H}}\mathbf{F}\mathbf{F}^\mathrm{H}\mathbf{A}_\mathrm{T})^\mathrm{T}\odot\mathbf{R}_0\},\\
	&\mathbf{J}_{\boldsymbol\phi_{\mathrm{T}}\boldsymbol\phi_{\mathrm{T}}}=\gamma  \Re \{(\mathbf{B}^\mathrm{H}\mathbf{A}_{\mathrm{R}}^{\mathrm{H}}\mathbf{A}_{\mathrm{R}}\mathbf{B})\odot({\mathbf{P}}_\mathrm{T}^{\mathrm{H}}\mathbf{FF}^\mathrm{H}{\mathbf{P}}_\mathrm{T})^\mathrm{T}\odot\mathbf{R}_0\},\\
	&\mathbf{J}_{\boldsymbol\beta_{\mathrm{R}}\boldsymbol\beta_{\mathrm{R}}}=\mathbf{J}_{\boldsymbol\beta_{\mathrm{I}}\boldsymbol\beta_{\mathrm{I}}}=\Re \{(\mathbf{A}_{\mathrm{R}}^{\mathrm{H}}\mathbf{A}_{\mathrm{R}})\odot(\mathbf{A}_\mathrm{T}^{\mathrm{H}}\mathbf{F}\mathbf{F}^\mathrm{H}\mathbf{A}_\mathrm{T})^\mathrm{T}\odot\mathbf{R}_0\},\\
	&\mathbf{J}_{\boldsymbol\tau\boldsymbol\tau}=\gamma \Re \{(\mathbf{B}^\mathrm{H}\mathbf{A}_{\mathrm{R}}^{\mathrm{H}}\mathbf{A}_{\mathrm{R}}\mathbf{B})\odot(\mathbf{A}_\mathrm{T}^{\mathrm{H}}\mathbf{FF}^\mathrm{H}\mathbf{A}_\mathrm{T})^\mathrm{T}\odot\mathbf{R}_2\},\\
	&\mathbf{J}_{\boldsymbol\theta_{\mathrm{R}}\boldsymbol\theta_{\mathrm{T}}}=\gamma\Im \{ j(\mathbf{B}^\mathrm{H}{\mathbf{K}_{\mathrm{R}}}^\mathrm{H}\mathbf{A}_{\mathrm{R}}\mathbf{B})\odot({\mathbf{K}}_\mathrm{T}^{\mathrm{H}}\mathbf{FF}^\mathrm{H}\mathbf{A}_\mathrm{T})^\mathrm{T}\odot\mathbf{R}_0\},\\
	&\mathbf{J}_{\boldsymbol\theta_{\mathrm{R}}\boldsymbol\phi_{\mathrm{R}}}=\gamma  \Re \{(\mathbf{B}^\mathrm{H}{\mathbf{K}}_{\mathrm{R}}^{\mathrm{H}}{\mathbf{P}}_{\mathrm{R}}\mathbf{B})\odot(\mathbf{A}_\mathrm{T}^{\mathrm{H}}\mathbf{F}\mathbf{F}^\mathrm{H}\mathbf{A}_\mathrm{T})^\mathrm{T}\odot\mathbf{R}_0\},\\
	&\mathbf{J}_{\boldsymbol\theta_{\mathrm{R}}\boldsymbol\phi_{\mathrm{T}}}=\gamma\Im \{ j(\mathbf{B}^\mathrm{H}{\mathbf{K}_{\mathrm{R}}}^\mathrm{H}\mathbf{A}_{\mathrm{R}}\mathbf{B})\odot({\mathbf{P}}_\mathrm{T}^{\mathrm{H}}\mathbf{FF}^\mathrm{H}\mathbf{A}_\mathrm{T})^\mathrm{T}\odot\mathbf{R}_0\},\\
	&\mathbf{J}_{\boldsymbol\theta_{\mathrm{T}}\boldsymbol\phi_{\mathrm{R}}}=\gamma  \Im \{j(\mathbf{B}^\mathrm{H}\mathbf{A}_{\mathrm{R}}^{\mathrm{H}}{\mathbf{P}}_{\mathrm{R}}\mathbf{B})\odot(\mathbf{A}_\mathrm{T}^{\mathrm{H}}\mathbf{FF}^\mathrm{H}{\mathbf{K}}_\mathrm{T})^\mathrm{T}\odot\mathbf{R}_0\},\\
	&\mathbf{J}_{\boldsymbol\theta_{\mathrm{T}}\boldsymbol\phi_{\mathrm{T}}}=\gamma  \Re \{(\mathbf{B}^\mathrm{H}\mathbf{A}_{\mathrm{R}}^{\mathrm{H}}\mathbf{A}_{\mathrm{R}}\mathbf{B})\odot({\mathbf{P}}_\mathrm{T}^{\mathrm{H}}\mathbf{FF}^\mathrm{H}{\mathbf{K}}_\mathrm{T})^\mathrm{T}\odot\mathbf{R}_0\},\\
	&\mathbf{J}_{\boldsymbol\phi_{\mathrm{R}}\boldsymbol\phi_{\mathrm{T}}}=\gamma  \Im \{j(\mathbf{B}^\mathrm{H}{\mathbf{P}}_{\mathrm{R}}^{\mathrm{H}}\mathbf{A}_{\mathrm{R}}\mathbf{B})\odot({\mathbf{P}}_\mathrm{T}^{\mathrm{H}}\mathbf{FF}^\mathrm{H}\mathbf{A}_\mathrm{T})^\mathrm{T}\odot\mathbf{R}_0\},\\
	&\mathbf{J}_{\boldsymbol\beta_{\mathrm{R}}\boldsymbol\beta_{\mathrm{I}}}=\gamma \Re \{ j(\mathbf{A}_{\mathrm{R}}^{\mathrm{H}}\mathbf{A}_{\mathrm{R}})\odot(\mathbf{A}_\mathrm{T}^{\mathrm{H}}\mathbf{FF}^\mathrm{H}\mathbf{A}_\mathrm{T})^\mathrm{T}\odot\mathbf{R}_0\},\\
	&\mathbf{J}_{\boldsymbol\theta_{\mathrm{R}}\boldsymbol\tau}=\gamma \Re \{ j(\mathbf{B}^\mathrm{H}{\mathbf{K}}_{\mathrm{R}}^{\mathrm{H}}\mathbf{A}_{\mathrm{R}}\mathbf{B})\odot(\mathbf{A}_\mathrm{T}^{\mathrm{H}}\mathbf{FF}^\mathrm{H}\mathbf{A}_\mathrm{T})^\mathrm{T}\odot\mathbf{R}_1\},\label{eq:thetar_Tau}\\
	&\mathbf{J}_{\boldsymbol\theta_{\mathrm{T}}\boldsymbol\tau}=\gamma \Im \{ (\mathbf{B}^\mathrm{H}\mathbf{A}_{\mathrm{R}}^{\mathrm{H}}\mathbf{A}_{\mathrm{R}}\mathbf{B})\odot(\mathbf{A}_\mathrm{T}^{\mathrm{H}}\mathbf{FF}^\mathrm{H}{\mathbf{K}}_\mathrm{T})^\mathrm{T}\odot\mathbf{R}_1\},\label{eq:thetat_Tau}\\
	&\mathbf{J}_{\boldsymbol\phi_{\mathrm{R}}\boldsymbol\tau}=\gamma \Re \{ j(\mathbf{B}^\mathrm{H}{\mathbf{P}}_{\mathrm{R}}^{\mathrm{H}}\mathbf{A}_{\mathrm{R}}\mathbf{B})\odot(\mathbf{A}_\mathrm{T}^{\mathrm{H}}\mathbf{FF}^\mathrm{H}\mathbf{A}_\mathrm{T})^\mathrm{T}\odot\mathbf{R}_1\},\label{eq:phir_Tau}\\
	&\mathbf{J}_{\boldsymbol\phi_{\mathrm{T}}\boldsymbol\tau}=\gamma \Im \{ (\mathbf{B}^\mathrm{H}\mathbf{A}_{\mathrm{R}}^{\mathrm{H}}\mathbf{A}_{\mathrm{R}}\mathbf{B})\odot(\mathbf{A}_\mathrm{T}^{\mathrm{H}}\mathbf{FF}^\mathrm{H}{\mathbf{P}}_\mathrm{T})^\mathrm{T}\odot\mathbf{R}_1\},\label{eq:phit_Tau}\\
	&\mathbf{J}_{\boldsymbol\theta_{\mathrm{R}}\boldsymbol\beta_{\mathrm{I}}}+j\mathbf{J}_{\boldsymbol\theta_{\mathrm{R}}\boldsymbol\beta_{\mathrm{R}}}=-\gamma~ (\mathbf{B}^\mathrm{H}{\mathbf{K}}_{\mathrm{R}}^{\mathrm{H}}\mathbf{A}_{\mathrm{R}})\odot(\mathbf{A}_\mathrm{T}^{\mathrm{H}}\mathbf{FF}^\mathrm{H}\mathbf{A}_\mathrm{T})^\mathrm{T}\odot\mathbf{R}_0,\\	
	&\mathbf{J}_{\boldsymbol\theta_{\mathrm{T}}\boldsymbol\beta_{\mathrm{I}}}+j\mathbf{J}_{\boldsymbol\theta_{\mathrm{T}}\boldsymbol\beta_{\mathrm{R}}}=\gamma~ (\mathbf{B}^\mathrm{H}\mathbf{A}_{\mathrm{R}}^{\mathrm{H}}\mathbf{A}_{\mathrm{R}})\odot(\mathbf{A}_\mathrm{T}^{\mathrm{H}}\mathbf{FF}^\mathrm{H}{\mathbf{K}}_\mathrm{T})^\mathrm{T}\odot\mathbf{R}_0,\\
	&\mathbf{J}_{\boldsymbol\phi_{\mathrm{R}}\boldsymbol\beta_{\mathrm{I}}}+j\mathbf{J}_{\boldsymbol\phi_{\mathrm{R}}\boldsymbol\beta_{\mathrm{R}}}=-\gamma~ (\mathbf{B}^\mathrm{H}{\mathbf{P}}_{\mathrm{R}}^{\mathrm{H}}\mathbf{A}_{\mathrm{R}})\odot(\mathbf{A}_\mathrm{T}^{\mathrm{H}}\mathbf{FF}^\mathrm{H}\mathbf{A}_\mathrm{T})^\mathrm{T}\odot\mathbf{R}_0,\\
	&\mathbf{J}_{\boldsymbol\phi_{\mathrm{T}}\boldsymbol\beta_{\mathrm{I}}}+j\mathbf{J}_{\boldsymbol\phi_{\mathrm{T}}\boldsymbol\beta_{\mathrm{R}}}=\gamma~ (\mathbf{B}^\mathrm{H}\mathbf{A}_{\mathrm{R}}^{\mathrm{H}}\mathbf{A}_{\mathrm{R}})\odot(\mathbf{A}_\mathrm{T}^{\mathrm{H}}\mathbf{FF}^\mathrm{H}{\mathbf{P}}_\mathrm{T})^\mathrm{T}\odot\mathbf{R}_0,\\
	&\mathbf{J}_{\boldsymbol\beta_{\mathrm{R}}\boldsymbol\tau}+j\mathbf{J}_{\boldsymbol\beta_{\mathrm{I}}\boldsymbol\tau}=\gamma~(\mathbf{A}^\mathrm{H}\mathbf{A}_{\mathrm{R}}\mathbf{B})\odot(\mathbf{A}_\mathrm{T}^{\mathrm{H}}\mathbf{FF}^\mathrm{H}\mathbf{A}_\mathrm{T})^\mathrm{T}\odot\mathbf{R}_1,\label{eq:betaTau}
	\end{align}
\end{subequations}
\normalsize
\textcolor{black}{where, similar to $\mathbf{R}_0$ the elements of $\mathbf{R}_1$ and $\mathbf{R}_2$ are given by
\begin{align}
[\mathbf{R}_1]_{u,v}\triangleq\int_0^{T_o}\frac{\partial{s}_\ell(t-\tau_v)}{\partial\tau_v}{s}_\ell^*(t-\tau_u)\mathrm{d}t&=\int_{-W/2}^{W/2}2\pi f|P(f)|^2e^{-j2\pi f\Delta\tau_{uv}}\mathrm{d}f\label{eq:R_1},\\
[\mathbf{R}_2]_{u,v}\triangleq\int_0^{T_o}\frac{\partial{s}_\ell(t-\tau_v)}{\partial\tau_v}\frac{\partial{s}^*_\ell(t-\tau_u)}{\partial\tau_u}\mathrm{d}t&=\int_{-W/2}^{W/2}(2\pi f)^2|P(f)|^2e^{-j2\pi f\Delta\tau_{uv}}\mathrm{d}f\label{eq:R_2}.
\end{align}}
\vspace{-6mm}
\section{CRLBs of Single-Path Parameters}\label{app:CRLB}
\textcolor{black}{Evaluating the submatrices in (\ref{eq:bigFIMS}) for the single-path case ($M=1$) then using the notation introduced in Section \ref{sec:los_crlb} to simplify presentation, it can be seen that
\begin{align}
\CRLB(\tau)&\triangleq\frac{1}{J_{\tau\tau}}=\frac{1}{4\gamma\pi^2|\beta|^2GW_{\mathrm{eff}}^2}.
\end{align}
Moreover, the FIM of $\boldsymbol{\varphi}_{\boldsymbol{\theta},\boldsymbol{\phi},\boldsymbol{\beta}}\triangleq[\theta_\mathrm{R},\theta_\mathrm{T},\phi_\mathrm{R},\phi_\mathrm{T},\beta_\mathrm{R},\beta_\mathrm{I}]^\mathrm{T}$ can be obtained as}
\begin{align}
\mathbf{J}_{\boldsymbol{\theta},\boldsymbol{\phi},\boldsymbol{\beta}}&\triangleq\gamma|\beta|^2\left[\begin{array}{c:c}
\mathbf{A}&\mathbf{U}\\
\hdashline
\mathbf{U}^\mathrm{T}&\mathbf{C}\end{array}\right]=\gamma|\beta|^2\left[\begin{array}{cccc:cc}
R_\theta G& 0 &X_{\theta ,\phi}G &0 &0 &0 \\
0& T_\theta &0&Y'_{\theta ,\phi}&S_1&U_1\\
X_{\theta ,\phi}G&0&R_\phi G&0&0&0\\
0&Y'_{\theta ,\phi}&0& T_\phi &S_2&U_2\\
\hdashline
0&S_1&0&S_2&\frac{G}{|\beta|^2}&0\\
0&U_1&0&U_2&0&\frac{G}{|\beta|^2}\\
\end{array}
\right],\label{eq:partitioned2}
\end{align}
where $S_1\triangleq-\Im\{\beta V_\theta \}/|\beta|^2, U_1\triangleq\Re\{\beta V_\theta \}/|\beta|^2$, $S_2\triangleq-\Im\{\beta V_\phi \}/|\beta|^2,$ and $U_2\triangleq\Re\{\beta V_\phi \}/|\beta|^2$. Then, we compute the CRLBs of DOA and DOD using their EFIM as
\begin{align}
\mathbf{J}^\mathrm{e}_{{\boldsymbol{\theta},\boldsymbol{\phi}}}&=\mathbf{A-UC}^{-1}\mathbf{U}^\mathrm{T}=\gamma|\beta|^2\left[\begin{array}{cccc}
R_\theta G& 0 &X_{\theta ,\phi}G &0\\
0& \frac{L_\theta }{G}&0&\frac{Y_{\theta ,\phi}}{G}\\
X_{\theta ,\phi}G&0&R_\phi G&0\\
0&\frac{Y_{\theta ,\phi}}{G}&0& \frac{L_\phi }{G}\\
\end{array}
\right]\label{eq:FIM_DOD_DOA}.
\end{align}
To simplify the inverse computation, we utilize the independence between DOA and DOD to re-order $\mathbf{J}^\mathrm{e}_{{\boldsymbol{\theta},\boldsymbol{\phi}}}$ to write (\ref{eq:crlb:expressions}), from which (\ref{eq:closed_CRLB}) follow.
\begin{subequations}\label{eq:crlb:expressions}
	\begin{align}
	\CRLB\left(\theta_\mathrm{R},\phi_\mathrm{R}\right)&=
	\frac{1}{\gamma|\beta|^2G}\begin{bmatrix}
	R_\theta&X_{\theta ,\phi}\\
	X_{\theta ,\phi}&R_\phi \end{bmatrix}^{-1},\\
	\CRLB\left(\theta_\mathrm{T},\phi_\mathrm{T}\right)&=
	\frac{G}{\gamma|\beta|^2}\begin{bmatrix}
	L_\theta &Y_{\theta ,\phi}\\
	Y_{\theta ,\phi}&L_\phi \end{bmatrix}^{-1},
	\end{align}
\end{subequations}
\vspace{-6mm}
\section{Non-Zero Elements of $\boldsymbol\Upsilon$}\label{app:derivatives}
\textcolor{black}{For the LOS case, it can be shown that
	\small
	\begin{align}\label{eq:derivatives_los}
	\begin{aligned}
	\frac{\partial \phi_\mathrm{UE,1}}{\partial\phi_0}&=\frac{-p_x'^2\cos\theta_0+(p_y\cos\phi_0-p_x\sin\phi_0)p_y'}{p_x'^2+p_y'^2},\\
	\frac{\partial \phi_\mathrm{UE,1}}{\partial\theta_0}&=-\frac{p_x'p_z'}{p_x'^2+p_y'^2},\\
	\frac{\partial \phi_\mathrm{UE,1}}{\partial\mathbf{p}}&=(\mathbf{r}_2\mathbf{r}_1^\mathrm{T}-\mathbf{r}_1\mathbf{r}_2^\mathrm{T})\frac{\mathbf{p}}{p_x'^2+p_y'^2},\\
		\frac{\partial \phi_\mathrm{BS,1}}{\partial\mathbf{p}}&=\frac{[-p_y,p_x,0]^\mathrm{T}}{p_x^2+p_y^2},\\
	\frac{\partial \theta_\mathrm{BS,1}}{\partial\mathbf{p}}&=\frac{[p_xp_z,p_yp_z,-(p_x^2+p_y^2)]^\mathrm{T}}{\|\mathbf{p}\|\sqrt{p_x^2+p_y^2}},
		\end{aligned}\qquad
			\begin{aligned}
				\frac{\partial \theta_\mathrm{UE,1}}{\partial\phi_0}&=\frac{p_x'\sin\theta_0}{\sqrt{p_x'^2+p_y'^2}},\\
	\frac{\partial \theta_\mathrm{UE,1}}{\partial\theta_0}&=-\frac{p_y'}{\sqrt{p_x'^2+p_y'^2}},\\
	\frac{\partial \theta_\mathrm{UE,1}}{\partial\mathbf{p}}&=\frac{1}{\sqrt{p_x'^2+p_y'^2}}\left(\mathbf{r}_3+\frac{p_z'}{\|\mathbf{p}\|}\mathbf{p}\right).\\
\frac{\partial \tau_1}{\partial\mathbf{p}}&=\frac{\mathbf{p}}{c\|\mathbf{p}\|},
			\end{aligned}
	\end{align}
\normalsize
where $\mathbf{r}_i, 1\leq{i}\leq{3}$ is the $i^\mathrm{th}$ column of $\mathbf{R}(\theta_0,\phi_0)$. }

\textcolor{black}{For the NLOS case, we use the similarity in \eqref{LOS_geometry_BS}, \eqref{LOS_geometry_UE}, and \eqref{NLOS_geometry}, to obtain, $\frac{\partial \phi_\mathrm{UE,m}}{\partial\phi_0}$, $\frac{\partial \phi_\mathrm{UE,m}}{\partial\theta_0}$, $\frac{\partial \phi_\mathrm{UE,m}}{\partial\mathbf{p}}$, $\frac{\partial \theta_\mathrm{UE,m}}{\partial\phi_0}$, $\frac{\partial \theta_\mathrm{UE,m}}{\partial\theta_0}$, $\frac{\partial \theta_\mathrm{UE,m}}{\partial\mathbf{p}}$, $\frac{\partial \tau_m}{\partial\mathbf{p}}$ by replacing $\mathbf{p}$ and $\mathbf{p}'$ in \eqref{eq:derivatives_los} with $\mathbf{w}_m$ and $\mathbf{w}'_m$, receptively. Note that in NLOS, $\frac{\partial \phi_\mathrm{BS,m}}{\partial\mathbf{p}}=\frac{\partial \theta_\mathrm{BS,m}}{\partial\mathbf{p}}=0$. Finally, we obtain the following
\small
\begin{align}\label{eq:derivatives_nlos}
	\begin{aligned}
		\frac{\partial \phi_\mathrm{UE,m}}{\partial\mathbf{q}_m}&=-\frac{\partial \phi_\mathrm{UE,m}}{\partial\mathbf{p}},\\
	\frac{\partial \theta_\mathrm{UE,m}}{\partial\mathbf{q}_m}&=-\frac{\partial \theta_\mathrm{UE,m}}{\partial\mathbf{p}},\\
	\frac{\partial \phi_\mathrm{BS,m}}{\partial\mathbf{q}_m}&=\frac{[-q_{m,y},q_{m,x},0]^\mathrm{T}}{q_{m,x}^2+q_{m,y}^2},\\
	\end{aligned}\qquad
\begin{aligned}
	\frac{\partial \theta_\mathrm{BS,m}}{\partial\mathbf{q}_m}&=\frac{[q_{m,x}q_{m,z},q_{m,y}q_{m,z},-(q_{m,x}^2+q_{m,y}^2)]^\mathrm{T}}{\|\mathbf{q}_m\|\sqrt{q_{m,x}^2+q_{m,y}^2}},\\
\frac{\partial \tau_m}{\partial\mathbf{q}_m}&=\frac{\mathbf{q}_m}{c\|\mathbf{q}_m\|}-\frac{\mathbf{w}_m}{c\|\mathbf{w}_m\|},\\
\mathbf{w}_m&\triangleq\mathbf{p}-\mathbf{q}_m
\end{aligned}
	\end{align}
}\normalsize
\vspace{-5mm}
\section{Proof of Proposition 2}\label{app:OP_FIM}
We start by deriving the EFIM of the location parameters for the $m^\mathrm{th}$ path. Then, we show that the overall EFIM can be written as a summation of the individual EFIM. For the $m^\mathrm{th}$ path,
\begin{align}
\mathbf{J}_{\mathbf{o,p,q}_m}^{(m)}&={\boldsymbol{\Upsilon}}_m\boldsymbol{\Lambda}^\mathrm{e}_m{\boldsymbol{\Upsilon}}_m^\mathrm{T}=\begin{bmatrix}
\overline{\boldsymbol{\Upsilon}}_m\boldsymbol{\Lambda}^\mathrm{e}_m\overline{\boldsymbol{\Upsilon}}_m^\mathrm{T}&\overline{\boldsymbol{\Upsilon}}_m\boldsymbol{\Lambda}^\mathrm{e}_m\overline{\overline{\boldsymbol{\Upsilon}}}_m^\mathrm{T}\\
\overline{\overline{\boldsymbol{\Upsilon}}}_m\boldsymbol{\Lambda}^\mathrm{e}_m\overline{\boldsymbol{\Upsilon}}_m^\mathrm{T}&\overline{\overline{\boldsymbol{\Upsilon}}}\boldsymbol{\Lambda}^\mathrm{e}_m\overline{\overline{\boldsymbol{\Upsilon}}}^\mathrm{T}
\end{bmatrix}.
\end{align}
Consequently, for $\mathbf{o}$ and $\mathbf{p}$, by Schur's complement,
\begin{align}
\mathbf{J}_{\mathbf{o,p}}^{(m)}&=\overline{\boldsymbol{\Upsilon}}_m\boldsymbol{\Lambda}^\mathrm{e}_m\overline{\boldsymbol{\Upsilon}}_m^\mathrm{T}-\overline{\boldsymbol{\Upsilon}}_m\boldsymbol{\Lambda}^\mathrm{e}_m\overline{\overline{\boldsymbol{\Upsilon}}}_m^\mathrm{T}\left(\overline{\overline{\boldsymbol{\Upsilon}}}\boldsymbol{\Lambda}^\mathrm{e}_m\overline{\overline{\boldsymbol{\Upsilon}}}^\mathrm{T}\right)^{-1}\overline{\overline{\boldsymbol{\Upsilon}}}_m\boldsymbol{\Lambda}^\mathrm{e}_m\overline{\boldsymbol{\Upsilon}}_m^\mathrm{T}.\label{eq:mth_fim_op}
\end{align}
Recall that for $m=1$ the second term above is undefined. Focusing on all the $M$ paths, define
\begin{align}
\boldsymbol{\Upsilon}&\triangleq\begin{bmatrix}\overline{\boldsymbol{\Upsilon}}_1\overline{\boldsymbol{\Upsilon}}_2&\cdots&\overline{\boldsymbol{\Upsilon}}_M\\
\mathbf{0}& \overline{\overline{\boldsymbol{\Upsilon}}}_2&\cdots&\mathbf{0}\\
\vdots&\vdots&\ddots&\vdots&\\
\mathbf{0}&\mathbf{0}&\cdots& \overline{\overline{\boldsymbol{\Upsilon}}}_M\\
\end{bmatrix},\qquad\qquad
\mathbf{J}^\mathrm{e}_{\boldsymbol{\varphi}_{\mathrm{CH}}}\triangleq\begin{bmatrix}
\boldsymbol{\Lambda}^\mathrm{e}_1&\mathbf{0}&\cdots&\mathbf{0}\\
\mathbf{0}&\boldsymbol{\Lambda}^\mathrm{e}_2&\cdots&\mathbf{0}\\
\vdots&\vdots&\ddots&\vdots&\\
\mathbf{0}&\mathbf{0}&\cdots&\boldsymbol{\Lambda}^\mathrm{e}_M\\
\end{bmatrix}.
\end{align}
Then from (\ref{eq:tranformation}),
\begin{align}
\mathbf{J}_{\boldsymbol{\varphi}_{\mathrm{L}}}&=\boldsymbol{\Upsilon}\mathbf{J}^\mathrm{e}_{\boldsymbol{\varphi}_{\mathrm{CH}}}\boldsymbol{\Upsilon}^\mathrm{T}=\begin{bmatrix}
\sum_{m=1}^M{\overline{\boldsymbol{\Upsilon}}}_m\boldsymbol{\Lambda}^\mathrm{e}_m\overline{\boldsymbol{\Upsilon}}_m^\mathrm{T}&  {\overline{\boldsymbol{\Upsilon}}}_2\boldsymbol{\Lambda}^\mathrm{e}_2\overline{\overline{\boldsymbol{\Upsilon}}}_2^\mathrm{T}&\cdots&{\overline{\boldsymbol{\Upsilon}}}_M\boldsymbol{\Lambda}^\mathrm{e}_M\overline{\overline{\boldsymbol{\Upsilon}}}_M^\mathrm{T}\\
\overline{\overline{\boldsymbol{\Upsilon}}}_2\boldsymbol{\Lambda}^\mathrm{e}_2{\overline{\boldsymbol{\Upsilon}}}_2^\mathrm{T}&
\overline{\overline{\boldsymbol{\Upsilon}}}_2\boldsymbol{\Lambda}^\mathrm{e}_2\overline{\overline{\boldsymbol{\Upsilon}}}_2^\mathrm{T}&\cdots&\mathbf{0}\\
\vdots&\vdots&\ddots&\vdots&\\
\overline{\overline{\boldsymbol{\Upsilon}}}_M\boldsymbol{\Lambda}^\mathrm{e}_M{\overline{\boldsymbol{\Upsilon}}}_M^\mathrm{T}&\mathbf{0}&\cdots&
\overline{\overline{\boldsymbol{\Upsilon}}}_M\boldsymbol{\Lambda}^\mathrm{e}_M\overline{\overline{\boldsymbol{\Upsilon}}}_M^\mathrm{T}
\end{bmatrix}.\notag
\end{align}
By Schur's Complement and using (\ref{eq:mth_fim_op}), it is easy to verify that
\begin{align}
{\mathbf{J}}^{\mathrm{e}}_{\mathbf{o,p}}=&\sum_{m=1}^M{\overline{\boldsymbol{\Upsilon}}}_m\boldsymbol{\Lambda}^\mathrm{e}_m\overline{\boldsymbol{\Upsilon}}_m^\mathrm{T}-\sum_{m=2}^M{\overline{\boldsymbol{\Upsilon}}}_m\boldsymbol{\Lambda}^\mathrm{e}_m\overline{\overline{\boldsymbol{\Upsilon}}}_m^\mathrm{T}\left(\overline{\overline{\boldsymbol{\Upsilon}}}_m\boldsymbol{\Lambda}^\mathrm{e}_m\overline{\overline{\boldsymbol{\Upsilon}}}_m^\mathrm{T}\right)^{-1}\overline {\overline{\boldsymbol{\Upsilon}}}_m\boldsymbol{\Lambda}^\mathrm{e}_m{\overline{\boldsymbol{\Upsilon}}}_m^\mathrm{T},=\sum_{m=1}^{M}\mathbf{J}_{\mathbf{o,p}}^{(m)}.\label{eq:total_fim_op}
\end{align}
\vspace{-8mm}
\section{Closed-form PEB and OEB for LOS-only}\label{app:LOS_PEB_OEB}
\textcolor{black}{To find the LOS SOEB and SPEB in a closed form, we note that
\begin{align}
\mathbf{J}_{\mathbf{o},\mathbf{p}}^{-1}&=\left(\boldsymbol{\Upsilon}\boldsymbol\Lambda_1^\mathrm{e}\boldsymbol{\Upsilon}^\mathrm{T}\right)^{-1}=\left(\boldsymbol{\Upsilon}^{-1}\right)^\mathrm{T}\left(\boldsymbol\Lambda_1^\mathrm{e}\right)^{-1}\boldsymbol{\Upsilon}^{-1}=\left(\frac{\partial\boldsymbol{\varphi}^\mathrm{T}_\mathrm{L}}{\partial\boldsymbol{\varphi}_\mathrm{CH}}\right)^\mathrm{T}\left(\boldsymbol\Lambda_1^\mathrm{e}\right)^{-1}\left(\frac{\partial\boldsymbol{\varphi}^\mathrm{T}_\mathrm{L}}{\partial\boldsymbol{\varphi}_\mathrm{CH}}\right)\label{eq:proof},
\end{align}
where the rightmost term is obtained by the inverse function theorem \cite{fleming1987functions}. Consequently,
\begin{subequations}
\begin{align}
\mathrm{SPEB}&=\Tr\left\lbrace\left(\frac{\partial\mathbf{p}^\mathrm{T}}{\partial\boldsymbol{\varphi}_\mathrm{CH}}\right)^\mathrm{T}\left(\boldsymbol\Lambda_1^\mathrm{e}\right)^{-1}\left(\frac{\partial\mathbf{p}^\mathrm{T}}{\partial\boldsymbol{\varphi}_\mathrm{CH}}\right)\right\rbrace.\label{eq:proof_peb}\\
\mathrm{SOEB}&=\Tr\left\lbrace\left(\frac{\partial\mathbf{o}^\mathrm{T}}{\partial\boldsymbol{\varphi}_\mathrm{CH}}\right)^\mathrm{T}\left(\boldsymbol\Lambda_1^\mathrm{e}\right)^{-1}\left(\frac{\partial\mathbf{o}^\mathrm{T}}{\partial\boldsymbol{\varphi}_\mathrm{CH}}\right)\right\rbrace.\label{eq:SOEB_trace}
\end{align}	
\end{subequations}
Dropping the LOS subscript ``1" and using spherical coordinates, we write
\begin{align}
\mathbf{p}=c\tau\begin{bmatrix}\cos\phi_\mathrm{BS}\sin\theta_\mathrm{BS},\sin\phi_\mathrm{BS}\sin\theta_\mathrm{BS},\cos\theta_\mathrm{BS}\end{bmatrix}^\mathrm{T}.\label{eq:p_spherical}
\end{align}
For the uplink, $\boldsymbol{\varphi}_\mathrm{CH}=[\theta_\mathrm{BS},\theta_\mathrm{UE},\phi_\mathrm{BS},\phi_\mathrm{UE},\tau]^\mathrm{T}$ (Section \ref{sec:general_formulation}). Therefore, it follows that
\begin{align}
\left(\frac{\partial\mathbf{p}^\mathrm{T}}{\partial\boldsymbol{\varphi}_\mathrm{CH}}\right)^\mathrm{T}=c\begin{bmatrix}
\tau\cos\phi_\mathrm{BS}\cos\theta_\mathrm{BS}&0&-\tau\sin\phi_\mathrm{BS}\sin\theta_\mathrm{BS}&0&\cos\phi_\mathrm{BS}\sin\theta_\mathrm{BS}\\
\tau\sin\phi_\mathrm{BS}\cos\theta_\mathrm{BS}&0&\tau\cos\phi_\mathrm{BS}\sin\theta_\mathrm{BS}&0&\sin\phi_\mathrm{BS}\sin\theta_\mathrm{BS}\\
-\tau\sin\theta_\mathrm{BS}&0&0&0&\cos\theta_\mathrm{BS}
\end{bmatrix}\label{eq:grad_p}
\end{align}
Defining $\sigma^2_{xx}$ and $\sigma^2_{xy}, x,y\in\{\theta_\mathrm{R},\theta_\mathrm{T},\phi_\mathrm{R},\phi_\mathrm{T},\tau\}$, as respectively the CRLB of $x$, and the covariance of $x$ and $y$, then from Appendix \ref{app:CRLB}, we can write the uplink EFIM as
\begin{align}
\left({\boldsymbol\Lambda_1^\mathrm{e}}\right)^{-1}=\begin{bmatrix}
\sigma^2_{\theta_\mathrm{BS}\theta_\mathrm{BS}}&0&\sigma^2_{\theta_\mathrm{BS}\phi_\mathrm{BS}}&0&0\\
0&\sigma^2_{\theta_\mathrm{UE}\theta_\mathrm{UE}}&0&\sigma^2_{\theta_\mathrm{UE}\phi_\mathrm{UE}}&0\\
\sigma^2_{\theta_\mathrm{BS}\phi_\mathrm{BS}}&0&\sigma^2_{\phi_\mathrm{BS}\phi_\mathrm{BS}}&0&0\\
0&\sigma^2_{\theta_\mathrm{UE}\phi_\mathrm{UE}}&0&\sigma^2_{\phi_\mathrm{UE}\phi_\mathrm{UE}}&0\\
0&0&0&0&\sigma^2_{\tau\tau}
\end{bmatrix}.\label{eq:inv_efim_los}
\end{align}	
Substituting in \eqref{eq:proof_peb}, and simplifying the results yield
\begin{align}
\mathrm{SPEB}=\|\mathbf{p}\|^2\sigma^2_{\theta_\mathrm{BS}\theta_\mathrm{BS}}+\|\mathbf{p}\|^2\sin^2\theta_\mathrm{BS}\sigma^2_{\phi_\mathrm{BS}\phi_\mathrm{BS}}+c^2\sigma^2_{\tau\tau}.\label{eq:SPEB_proof}
\end{align}	
To obtain downlink SPEB, we need to exchange columns 1 and 3 with 2 and 4 in \eqref{eq:grad_p}, respectively, while concurrently swapping the role of BS and UE angles in \eqref{eq:inv_efim_los}. Eventually, this leads to the same SPEB expression in \eqref{eq:SPEB_proof}.}

\textcolor{black}{We now focus on the SOEB. From the properties of spherical coordinates,
\begin{align}
\cos\phi_{\mathrm{UE}}&=\frac{p_x'}{\|\mathbf{p}\|\sin{\theta_\mathrm{UE}}}=-\frac{\mathbf{r}_1^\mathrm{T}\mathbf{p}}{\|\mathbf{p}\|\sin{\theta_\mathrm{UE}}}\quad \Longrightarrow\quad \cos(\phi_0-\phi_{\mathrm{BS}})=-\frac{\cos\phi_{\mathrm{UE}}\sin\theta_\mathrm{UE}}{\sin\theta_{\mathrm{BS}}}.\label{eq:cos_delta_phi}
\end{align}
Differentiating both sides w.r.t to $\theta_\mathrm{BS}$, we have
\begin{align}
-\sin(\phi_0-\phi_{\mathrm{BS}})\frac{\partial\phi_0}{\partial\theta_\mathrm{BS}}=-\frac{\cos\phi_{\mathrm{UE}}\sin\theta_\mathrm{UE}}{\sin^2\theta_{\mathrm{BS}}}\cos\theta_{\mathrm{BS}}.
\end{align}
Consequently, using \eqref{eq:cos_delta_phi} we obtain,
\begin{align}
\frac{\partial\phi_0}{\partial\theta_\mathrm{BS}}&=\cot\theta_\mathrm{BS}\cot(\phi_0-\phi_{\mathrm{BS}}).
\end{align}
Similarly, it can be shown that,
\begin{align}
\frac{\partial\phi_0}{\partial\theta_\mathrm{UE}}=\frac{\cos\phi_{\mathrm{UE}}\cos\theta_\mathrm{UE}}{\sin\theta_{\mathrm{BS}}\sin(\phi_0-\phi_{\mathrm{BS}})},\qquad \frac{\partial\phi_0}{\partial\phi_\mathrm{BS}}=1,\qquad\frac{\partial\phi_0}{\partial\phi_\mathrm{UE}}=-\frac{\sin\phi_{\mathrm{UE}}\sin\theta_\mathrm{UE}}{\sin\theta_{\mathrm{BS}}\sin(\phi_0-\phi_{\mathrm{BS}})}.
\end{align}}

\textcolor{black}{On the other hand, we have
\begin{align}
\cos\theta_{\mathrm{UE}}=\frac{-\mathbf{r}_3^\mathrm{T}\mathbf{p}}{\|\mathbf{p}\|}=\sin\theta_0\sin\theta_{\mathrm{BS}}\sin(\phi_0-\phi_{\mathrm{BS}})-\cos\theta_0\cos\theta_{\mathrm{BS}},
\end{align}
Similarly, differentiating and simplifying the results yield
\begin{subequations}
\begin{align}
\frac{\partial\theta_0}{\partial\theta_\mathrm{BS}}&=\frac{\partial\phi_0}{\partial\theta_\mathrm{BS}}\frac{\partial\theta_0}{\partial\phi_\mathrm{BS}}-\frac{\tan(\theta_0)\sin(\phi_0-\phi_{\mathrm{BS}})+\tan(\theta_\mathrm{BS})}{\tan(\theta_0)+\tan(\theta_\mathrm{BS})\sin(\phi_0-\phi_{\mathrm{BS}})},\\
\frac{\partial\theta_0}{\partial\theta_\mathrm{UE}}&=-\frac{\sin(\theta_\mathrm{UE})+\sin(\theta_0)\sin(\theta_\mathrm{BS})\cos(\phi_0-\phi_{\mathrm{BS}})\frac{\partial\phi_0}{\partial\theta_\mathrm{UE}}}{\sin(\theta_\mathrm{BS})\cos(\theta_0)\sin(\phi_0-\phi_{\mathrm{BS}})+\sin(\theta_0)\cos(\theta_\mathrm{BS})},\\
\frac{\partial\theta_0}{\partial\phi_\mathrm{BS}}&=-\frac{\tan(\theta_\mathrm{BS})\tan(\theta_0)\cos(\phi_0-\phi_{\mathrm{BS}})}{\tan(\theta_0)+\tan(\theta_\mathrm{BS})\sin(\phi_0-\phi_{\mathrm{BS}})},\\
\frac{\partial\theta_0}{\partial\phi_\mathrm{UE}}&=\frac{\partial\theta_0}{\partial\phi_\mathrm{BS}}\frac{\partial\phi_0}{\partial\phi_\mathrm{UE}}.
\end{align}
\end{subequations}
Note that both $\frac{\partial\theta_0}{\partial\tau}$ and $\frac{\partial\phi_0}{\partial\tau}$ are zeros. Therefore, SOEB is a weighted sum of the angular bounds. Recalling that $\partial\mathbf{o}=[\partial\theta_0,\partial\phi_0]^\mathrm{T}$, SOEB can be written from \eqref{eq:SOEB_trace} in the form
\begin{align}
\mathrm{SOEB}=&\left\|\frac{\partial\mathbf{o}}{\partial\theta_\mathrm{BS}}\right\|^2\sigma^2_{\theta_\mathrm{BS}\theta_\mathrm{BS}}+\left\|\frac{\partial\mathbf{o}}{\partial\theta_\mathrm{UE}}\right\|^2\sigma^2_{\theta_\mathrm{UE}\theta_\mathrm{UE}}+\left\|\frac{\partial\mathbf{o}}{\partial\phi_\mathrm{BS}}\right\|^2\sigma^2_{\phi_\mathrm{BS}\phi_\mathrm{BS}}+\left\|\frac{\partial\mathbf{o}}{\partial\phi_\mathrm{UE}}\right\|^2\sigma^2_{\phi_\mathrm{UE}\phi_\mathrm{UE}}\notag\\&+2\left(\frac{\partial\mathbf{o}^\mathrm{T}}{\partial\theta_\mathrm{BS}}\frac{\partial\mathbf{o}}{\partial\phi_\mathrm{BS}}\right)\sigma^2_{\theta_\mathrm{BS}\phi_\mathrm{BS}}+2\left(\frac{\partial\mathbf{o}^\mathrm{T}}{\partial\theta_\mathrm{UE}}\frac{\partial\mathbf{o}}{\partial\phi_\mathrm{UE}}\right)\sigma^2_{\theta_\mathrm{UE}\phi_\mathrm{UE}}.\label{eq:SOEB_proof}
\end{align}
Note that applying the column swapping procedure, described after \eqref{eq:SPEB_proof}, to obtain the downlink SPEB leads to the same expression as \eqref{eq:SOEB_proof}, hence \eqref{eq:SOEB_proof} is valid for both the uplink and downlink.}
\vspace{-10mm}
\bibliographystyle{IEEEtran}
\bibliography{bib_paper2}
\end{document}